\theoremstyle{plain}
\newtheorem*{theorem*}{Theorem}
\newtheorem{proposition}[theorem]{Proposition}
\theoremstyle{definition}
\newtheorem{assumptions}[theorem]{Assumptions}
\newtheorem{rem}[theorem]{Remark}
\newtheorem{notation}[theorem]{Notation}
\newcommand{\pullbackcorner}[1][dr]{\save*!/#1-1.2pc/#1:(-1,1)@^{|-}\restore}
\newcommand{\<}{\langle}
\renewcommand{\>}{\rangle}
\newcommand{\hyph}{\text{-}}
\newcommand{\op}{\mathit{op}}
\newcommand{\id}{\mathit{id}}
\mathchardef\ordinarycolon\mathcode`\:
\newcommand{\defeq}{:=}
\newcommand{\epito}{\twoheadrightarrow}
\renewcommand{\hom}{\mathsf{Hom}}
\newcommand{\bihom}{\mathsf{Bihom}}
\newcommand{\Pro}{\mathsf{Pro}}
\newcommand{\Ind}{\mathsf{Ind}}
\newcommand{\set}[2]{\{\,#1\mid#2\,\}}
\newcommand{\Free}{\mathbf{Free}}
\DeclareMathOperator*{\Lim}{Lim}
\DeclareMathOperator*{\Colim}{Colim}
\newcommand{\A}{\mathscr{A}}
\newcommand{\B}{\mathscr{B}}
\newcommand{\C}{\mathscr{C}}
\newcommand{\D}{\mathscr{D}}
\newcommand{\E}{\mathscr{E}}
\newcommand{\CSet}{\mathbf{Set}}
\newcommand{\CPos}{\mathbf{Pos}}
\newcommand{\CBA}{\mathbf{BA}}
\newcommand{\CDLat}{\mathbf{DLat}}
\newcommand{\CSLat}{\mathbf{SLat}}
\newcommand{\CVec}{\mathbf{Vec}}
\newcommand{\CMon}{\mathbf{Mon}}
\newcommand{\CPFMon}{\Pro\CMon_f}
\newcommand{\CAut}{\mathbf{Aut}}
\newcommand{\FLang}{\mathbf{LAN}}
\newcommand{\FVar}{\mathbf{LPV}}
\newcommand{\FPFMon}{\mathbf{PFMon}}
\newcommand{\Quo}{\mathsf{Quo}_f} 
\newcommand{\Reg}{\mathsf{Reg}}
\title{A Fibrational Approach to Automata Theory}
\author{Liang-Ting~Chen}
\author{Henning~Urbat}
\affil{Institut f\"{u}r Theoretische Informatik \\
  Technische Universit\"{a}t Braunschweig, Germany}
\keywords{Eilenberg's variety theorem, duality, coalgebra, Grothendieck
  fibration}
\subjclass{F.4.3 Formal Languages} 
\begin{document}
\maketitle
\begin{abstract}
  For predual categories $\C$ and $\D$ we establish isomorphisms
  between opfibrations representing local varieties of languages in
  $\C$, local pseudovarieties of $\D$-monoids, and finitely generated
  profinite $\D$-monoids. The global sections of these opfibrations are shown to correspond to varieties of
  languages in $\C$, pseudovarieties of $\D$-monoids, and profinite equational theories of
  $\D$-monoids, respectively. As an application, we obtain a new proof of Eilenberg's variety theorem along with several related results, covering varieties of languages and their coalgebraic modifications, Straubing's $\mathsf{C}$-varieties, fully
  invariant local varieties, etc., within a single framework.
\end{abstract}
\section{Introduction}
In algebraic automata theory, regular languages are studied in connection with associated algebraic structures, using Eilenberg's celebrated variety
theorem~\cite{Eilenberg1976}. This theorem establishes a one-to-one correspondence
between varieties of languages and pseudovarieties of monoids. By a
\emph{variety of languages} is meant a class of regular languages closed
under the boolean operations (union, intersection and complement), left and right derivatives, and preimages under free monoid morphisms. A \emph{pseudovariety of monoids} is a class of
finite monoids closed under submonoids, quotients, and finite products.

Not every interesting class of languages falls within this scope. For this
reason several authors weakened the closure properties in the definition of a
variety of languages, and proved Eilenberg-type theorems for these modified
varieties. For example, Pin's \emph{positive varieties}~\cite{Pin1995}, omitting
closure under complement, correspond to pseudovarieties of \emph{ordered}
monoids. Pol\'{a}k's \emph{disjunctive varieties}~\cite{Polak2001}, further
dropping closure under intersection, correspond to pseudovarieties of idempotent
semirings. Reutenauer's \emph{xor varieties}~\cite{Reutenauer1980}, closed under
symmetric difference in lieu of the boolean operations, correspond to
pseudovarieties of associative algebras over the field $\mathbb{Z}_2$.
Straubing~\cite{Straubing2002} introduced \emph{$\mathsf{C}$-varieties of
  languages}, where one restricts to closure under preimages of a chosen class
$\mathsf{C}$ of free monoid morphisms in lieu of all free monoid morphisms.
They are in bijection with \emph{$\mathsf{C}$-pseudovarieties of monoid morphisms}, these
being classes of monoid morphisms with suitable closure properties. 

A closely related line of work concerns ``local'' versions of Eilenberg's
variety theorem, where languages over a fixed alphabet $\Sigma$ are considered.
Using the well-known duality between boolean algebras and Stone spaces,
Pippenger~\cite{Pippenger2007} demonstrated that the boolean algebra
$\mathsf{Reg}(\Sigma)$ of all regular languages over $\Sigma$ dualises to the
underlying Stone space of the free profinite monoid on~$\Sigma$. Later, Gehrke,
Grigorieff, and Pin~\cite{Gehrke2008} considered  \emph{local varieties of
  languages over $\Sigma$}, i.e.~boolean subalgebras of $\mathsf{Reg}(\Sigma)$
closed under left and right derivatives, and characterised them as sets of
regular languages over $\Sigma$ definable by profinite equations. 

In the recent work of Ad\'{a}mek, Milius, Myers, and
Urbat~\cite{Adamek2014c,Adamek2015} a categorical approach to Eilenberg-type
theorems was presented, covering many of the aforementioned results uniformly.
The leading idea is to take two varieties of (possibly ordered) algebras $\C$
and $\D$ whose full subcategories of finite algebras are dually equivalent.
Local varieties of languages are then modelled as coalgebras in $\C$, and
monoids as monoid objects in $\D$. The main result of~\cite{Adamek2014c}, the
\textbf{General Local Variety Theorem}, states that \emph{local varieties of
  languages over $\Sigma$ in $\C$} (= sets of regular languages over $\Sigma$
closed under $\C$-algebraic operations and left and right derivatives)
correspond to \emph{local pseudovarieties of $\Sigma$-generated $\D$-monoids} (=
sets of $\Sigma$-generated finite $\D$-monoids closed under quotients and
subdirect products). The \textbf{General Variety Theorem} of~\cite{Adamek2015}
establishes a correspondence between \emph{varieties of languages in $\C$} and
\emph{pseudovarieties of $\D$-monoids}. Then the classical Eilenberg theorem is
recovered by taking $\C =$ boolean algebras and $\D=$ sets, and other choices of
$\C$ and $\D$ give its modifications due to Pin, Pol\'{a}k and Reutenauer along
with new concrete Eilenberg-type correspondences.

The present paper is a continuation of the above work, aiming at two intriguing
questions:
\begin{enumerate}
  \item the connection between local pseudovarieties of $\D$-monoids and
    profinite $\D$-monoids;
  \item the connection between the local and non-local versions of the General
    Variety Theorem; 
\end{enumerate}
left open in~\cite{Adamek2014c,Adamek2015}.
To attack these questions, we organise all local varieties of languages into a
category $\FLang$ whose objects are pairs $(\Sigma,V)$ of a finite alphabet
$\Sigma$ and a local variety of languages over $\Sigma$ in $\C$. With a suitable
choice of morphisms in $\FLang$ (see Definition~\ref{def:flang}) the projection
functor $p\colon\FLang\to\Free(\CMon\D)$ into the category of finitely generated
free $\D$-monoids, mapping $(\Sigma,V)$ to the free $\D$-monoid over $\Sigma$, is
an opfibration. In a similar fashion one can form the category $\FVar$ of local
pseudovarieties of $\D$-monoids and the category $\FPFMon$ of finitely generated
profinite $\D$-monoids, which again yield opfibrations over $\Free(\CMon\D)$.
\[
\xymatrix{
      \FLang\ar[dr]_{p} \ar[r]^\cong & \FVar \ar[d]^q \ar[r]^\cong & \FPFMon\ar[dl]^{q'} \\
      & \Free(\CMon\D) &
}
\]

Then we make two crucial observations. Firstly, we show that the global
sections (namely, right inverse functors) of the above opfibrations $p$, $q$ and $q'$
correspond precisely to varieties of languages in $\C$, pseudovarieties of
$\D$-monoids and profinite equational theories of $\D$-monoids, respectively.
Secondly, we prove that the three opfibrations are isomorphic. The isomorphism
$\FLang\cong\FVar$ is essentially the General Local Variety Theorem
of~\cite{Adamek2014c}, and the isomorphism $\FVar\cong\FPFMon$ is based on a
limit construction. From these isomorphisms it follows immediately that the
global sections of our three opfibrations are in bijective correspondence:
\begin{quote}
  \emph{There is a bijective correspondence between (i) varieties of languages
    in $\C$, (ii) pseudovarieties of $\D$-monoids and (iii) profinite equational
    theories of $\D$-monoids.}
\end{quote}
The bijection (ii)$\leftrightarrow$(iii) amounts to a categorical presentation
of the well-known Reiterman-Banaschewski
theorem~\cite{Reiterman1982,Banaschewski1983}. And (i)$\leftrightarrow$(ii)
gives a conceptually completely different categorical proof of the General
Variety
Theorem in~\cite{Adamek2015}. Furthermore, the flexibility of our fibrational
setting leads rather easily to a number of additional results. For example, by
replacing the category  $\Free(\CMon\D)$ with an arbitrary subcategory
$\mathsf{C}\hookrightarrow \Free(\CMon\D)$ we obtain a generalised version of
Straubing's variety theorem for $\mathsf{C}$-varieties of languages, as well as
a new local variety theorem for \emph{fully invariant local varieties of
  languages}, i.e.\ local varieties closed under preimages of endomorphisms of
free monoids.

Beyond these concrete results, we believe that the main contribution of the present paper is a further illumination of the intrinsic duality deeply hidden in  algebraic language theory, most notably of the subtle interweavings of local and non-local structures, and the role of profinite theories.

\section{Preliminaries}
\label{sec:review}

In this section we review the categorical approach to algebraic automata theory
developed in~\cite{Adamek2014c,Adamek2015}. The idea is to interpret local
varieties of languages inside a variety of algebras $\C$, and to relate them to
finite monoids in another variety of (possibly ordered) algebras $\D$ which is
\textbf{predual} to $\C$. The latter means that the full subcategories $\C_f$
and $\D_f$ of finite algebras are dually equivalent. Note that by an
\textbf{ordered algebra} we mean an algebra (over a finitary signature $\Gamma$)
with a poset structure on its underlying set making all operations  monotone.
Morphisms of ordered algebras are order-preserving $\Gamma$-homomorphisms. A
\emph{variety of ordered algebras} is a class of ordered algebras specified by
inequalities $t_1\leq t_2$ between $\Gamma$-terms. 

\begin{assumptions}\label{asm:basic}
  In the following $\C$ and $\D$ are predual varieties of algebras, where $\D$-algebras may be
    ordered, subject to the following conditions:
  \begin{enumerate}
    \item $\C$ and $\D$ are \textbf{locally finite}, i.e.\ every
      free algebra on a finite set is finite;
    \item epimorphisms in $\D$ are surjective;
    \item $\D$ is \textbf{entropic}, i.e.\ given an $m$-ary operation $\sigma$ and an $n$-ary operation $\tau$ in the signature of $\D$ and variables $x_{ij}$ ($i=1,\ldots ,m$, $j=1,\ldots , n$), the following equation holds in $\D$:
    \[
      \sigma(\tau(x_{11},\ldots ,x_{1n}),\ldots ,\tau(x_{m1},\ldots,x_{mn}))
    = \tau(\sigma(x_{11},\ldots , x_{m1}),\ldots , \sigma(x_{1n},\ldots,x_{mn})).
    \]
  \end{enumerate}
\end{assumptions}

\begin{notation} We write $\Phi\colon\CSet\to \C$ and $\Psi\colon\CSet\to \D$ for the left adjoints to the forgetful
  functors $|{-}|\colon\C\to\CSet$ and $|{-}|\colon\D\to\CSet$, respectively.  By $\mathbf{1}_\C = \Phi\mathbf{1}$ and
  $\mathbf{1}_\D=\Psi\mathbf{1}$ denote the free algebras over the singleton set.
\end{notation}

\begin{example}\label{ex:cd}
The following pairs of varieties $\C/\D$ satisfy our assumptions. The details of
  the first three examples can be found in~\cite{Johnstone1982}.
  \begin{enumerate}
      \item $\CBA/\CSet$: The Stone Representation Theorem exhibits a dual equivalence between the categories of finite
        boolean algebras and finite sets. It assigns to any finite boolean algebra $B$ the set $\CBA(B, \mathbf{2})$ of all homomorphisms into the two-chain $\mathbf{2}$. The dual
        of $h \colon A \to B$ is given by precomposition with $h$, i.e.\
        $f\in \CBA(B, \mathbf{2})$ is mapped to $f \circ h \in \CBA(A,
        \mathbf{2})$.
      \item $\CDLat/\CPos$: Similarly, the Birkhoff Representation Theorem exhibits a dual equivalence between the
        categories of finite distributive lattices with $0$ and $1$ and finite posets. It assigns to a
        finite distributive lattice $L$ the poset $\CDLat(L,
        \mathbf{2})$, ordered pointwise, where $\mathbf{2}$ is the two-chain. On morphisms the dual equivalence again acts by precomposition.
      \item $\CSLat/\CSLat$: The category of finite semilattices with $0$ is self-dual: the dual equivalence maps a finite semilattice $S$ to the semilattice $\CSLat_f(S,\mathbf{2})$ whose join is taken pointwise.
      \item $\mathbb{Z}_2\hyph\CVec/\mathbb{Z}_2\hyph\CVec$: 
        The category of finite-dimensional vector spaces
        over any field $F$ is self-dual, by mapping a vector space
        $V$ to its dual space $F\hyph\mathbf{Vec}(V, F)$. By restricting $F$ to the binary field
        $\mathbb{Z}_2$, the category is also locally finite.
    \end{enumerate}
\end{example}

\begin{rem}\label{rem:indpro}
Given a small finitely complete and cocomplete category $\A$ we denote by
$\mathcal{Y}\colon\A\to\Ind\A$ and $\mathcal{Y}^\op\colon\A\to\Pro\A$ the ind- and pro-completion of $\A$, i.e.\ the free completion under filtered colimits and cofiltered limits, respectively. There is an adjunction $F \dashv U \colon \Pro\A \to \Ind\A$ such
  that $\mathcal{Y}^\op = F \circ \mathcal{Y}$ and $\mathcal{Y} = U \circ
  \mathcal{Y}^\op$.
  \[
    \UseTwocells
    \xymatrix{
      & \A \ar[rd]^{\mathcal{Y}^\op} \ar[ld]_{\mathcal{Y}} \\
      \Ind\A \rrtwocell_U^F{'\bot} & & \Pro\A
    }
  \]
Applying this to $\A=\C_f$ with $\Ind(\C_f)= \C$ and $\Pro(\C_f) =
\Ind(\C_f^\op)^\op\cong \Ind(\D_f)^\op = \D^{op}$, we see that the equivalence
$\mathscr{C}_f\cong\mathscr{D}_f^\op$ extends to an adjunction between
$\mathscr{C}$ and $\mathscr{D}^\op$. We denote both the equivalence
$\C_f\cong\D_f^\op$ and the induced adjunction between $\C$ and $\D^\op$ by
  \[
    S \dashv P \colon\D_f^\op\xrightarrow{\cong}\C_f
    \quad\text{and}\quad
    S \dashv P \colon\D^\op\to\C.
  \]
\end{rem}

\subsection{Local varieties of languages in
  \texorpdfstring{$\boldsymbol{\C}$}{C}}
The coalgebraic treatment of automata roots in the observation that a
deterministic automaton without an initial state is a coalgebra $\gamma = \<
\gamma^\mathrm{1st}, \gamma^\mathrm{2nd}\>\colon Q \to \mathbf{2} \times
Q^\Sigma$ for the set functor $T_\Sigma^0 = \mathbf{2} \times (-)^\Sigma$. Here
$\Sigma$ is the finite input alphabet, $\mathbf{2} := \{\mathtt{yes},\,
\mathtt{no}\}$, $\gamma^\mathrm{1st}\colon Q\to \mathbf{2}$ is the
characteristic function of the final states, and $\gamma^\mathrm{2nd}\colon Q\to
Q^\Sigma$ is the transition map. In the following we consider automata in the
category $\C$, which requires to replace the set $\mathbf{2}$ by a suitable
``output'' object in $\C$.  Observe that the dual adjunction $S \dashv P
\colon\D^\op\to\C$ has \emph{dualising objects} $O_\C \defeq P\mathbf{1}_\D$ and
$O_\D \defeq S\mathbf{1}_\C$, that is, for all $M\in \D$ and $Q\in \C$ we have
\[
  |PM| \cong \C(\mathbf{1}_\C, PM)
  \cong \D(M, O_\D)
  \quad\text{and}\quad
  |SQ| \cong \D(\mathbf{1}_\D, SQ)\cong\C(Q, O_\C).
\]
Taking $M=\mathbf{1}_\D$ we see that the set $|O_\C|$ is isomorphic to $|O_\D|$. Note that in each of the categories $\C/\D$ in Example~\ref{ex:cd} the objects $O_\C$ and $O_\D$ have a two-element carrier. Motivated by this observation, we replace the set $\mathbf{2}$ by
the  object $O_\C$ to define automata in $\C$.

\begin{definition}
  A \textbf{$\boldsymbol{\Sigma}$-automaton in $\boldsymbol{\mathscr{C}}$} is a coalgebra $\gamma = \< \gamma^\mathrm{1st}, \gamma^\mathrm{2nd}\>\colon
    Q \to O_\C \times Q^\Sigma$
 for the endofunctor $T_\Sigma
  \defeq O_\C \times (-)^\Sigma$ on $\C$, where $(-)^\Sigma$ is the $\Sigma$-fold product.  A \textbf{subautomaton} of $(Q,\gamma)$ is a subcoalgebra of $(Q,\gamma)$, represented by an injective coalgebra homomorphism into $Q$. An automaton is called \textbf{finite} if the object $Q$ of states is finite, and \textbf{locally finite} if it is a filtered colimit of finite
  $\Sigma$-automata. The \textbf{rational fixpoint}~$\rho T_\Sigma$ is the
  filtered colimit of all finite $\Sigma$-automata.
  The categories of $\Sigma$-automata, finite $\Sigma$-automata and locally finite $\Sigma$-automata in $\C$ are denoted by $\CAut\Sigma$, $\CAut_f\Sigma$ and
  $\CAut_\mathit{lf}\Sigma$, respectively. Their morphisms are coalgebra homomorphisms.
\end{definition}

In~\cite{Milius2010, AMV2006} it is shown that the rational fixpoint $\rho
T_\Sigma$ is the terminal locally finite coalgebra (i.e.\ the terminal object of $\CAut_\mathit{lf}\Sigma$), with the
structure map $\rho T_\Sigma \xrightarrow{\zeta} T_\Sigma(\rho T_\Sigma)$ an
isomorphism. The rational fixpoint of the set functor $T_\Sigma^0 = \mathbf{2}\times (-)^\Sigma$ is the automaton of regular languages: the states of $\rho T_\Sigma^0$ form the set $\Reg(\Sigma)$ of regular languages over $\Sigma$, the final states are those languages containing the empty word $\varepsilon$, and the transitions are given by left derivatives, that is, $L\xrightarrow{a} a^{-1}L = \set{w\in\Sigma^*}{aw\in L}$ for $L\in\Reg(\Sigma)$ and $a\in\Sigma$. 

\begin{rem}\label{rem:ocod}
To simplify the presentation, we assume in the following that
$|O_\C|=|O_\D|=\mathbf{2}$. The main reason is that in this case the rational
fixpoint $\rho T_\Sigma$ is a lifting of the above automaton of regular
languages to $\C$, see the next proposition. Without this assumption one needs
to replace regular languages by regular behaviors, i.e.\ functions $\Sigma^*\to |O_\C|$ realised by finite Moore automata with output set $|O_\C|$. See also the discussion in~\cite[Section V]{Adamek2015}.
\end{rem}
\begin{proposition}[see~\cite{Adamek2014c}]
  \label{prop:characterisation_rhoT}
  The rational fixpoint $\rho T_\Sigma$ is carried by the set
  $\Reg(\Sigma)$. Its coalgebra structure $\rho T_\Sigma \xrightarrow{\zeta} O_\C \times (\rho T_\Sigma)^\Sigma$ is given by the $\C$-morphisms 
  \[
    \zeta^\mathrm{1st}(L) =
    \begin{cases}
      \mathtt{yes} & \text{if $\varepsilon\in L$;} \\
      \mathtt{no} & \text{otherwise,}
    \end{cases}
    \quad\text{and}\quad
    \zeta^\mathrm{2nd}(L)(a) = a^{-1}L.
  \]
 
\end{proposition}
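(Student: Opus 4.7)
The plan is to transport everything along the forgetful functor $U=|{-}|\colon\C\to\CSet$ and reduce to the classical fact that the rational fixpoint of the set functor $T_\Sigma^0=\mathbf{2}\times(-)^\Sigma$ is the automaton $(\Reg(\Sigma),\zeta)$ described in the statement. Since $\C$ is a variety, $U$ preserves filtered colimits and sends finite algebras to finite sets; since $|O_\C|=\mathbf{2}$, moreover $U\circ T_\Sigma=T_\Sigma^0\circ U$, so $U$ lifts to a functor $\overline U\colon\CAut\Sigma\to\mathsf{Coalg}(T_\Sigma^0)$ taking $\CAut_f\Sigma$ into finite $T_\Sigma^0$-coalgebras. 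Writing $\rho T_\Sigma=\Colim\CAut_f\Sigma$ and applying $\overline U$ therefore exhibits $\overline U(\rho T_\Sigma)$ as the filtered colimit, in $\mathsf{Coalg}(T_\Sigma^0)$, of the diagram $\overline U(\CAut_f\Sigma)$.

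It then remains to identify this colimit with $(\Reg(\Sigma),\zeta)$. First, each $\overline U(Q,\gamma)$ is a finite deterministic automaton, so the canonical map $\overline U(\rho T_\Sigma)\to\mathcal{P}(\Sigma^*)$ into the final $T_\Sigma^0$-coalgebra factors through $\Reg(\Sigma)$. This map is furthermore injective: the rational fixpoint in $\C$ is known to embed into the final coalgebra $\nu T_\Sigma$, and $U$ preserves monomorphisms because $\C$ is varietal, so $\overline U(\rho T_\Sigma)$ embeds into $\overline U(\nu T_\Sigma)$ and hence into $\mathcal{P}(\Sigma^*)$. For surjectivity onto $\Reg(\Sigma)$, I would lift each classical finite automaton into $\C$: given $L\in\Reg(\Sigma)$ recognised by $(Q,\delta)$ with $Q$ finite, the adjunction $\Phi\dashv U$ produces a unique $\C$-morphism $\bar\delta\colon\Phi Q\to O_\C\times(\Phi Q)^\Sigma$ whose underlying set map satisfies $|\bar\delta|\circ\eta_Q=(\id\times\eta_Q^\Sigma)\circ\delta$. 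By local finiteness of $\C$ (Assumption~\ref{asm:basic}(1)), $\Phi Q$ is finite, so $(\Phi Q,\bar\delta)\in\CAut_f\Sigma$; and by construction $\eta_Q$ is a $T_\Sigma^0$-coalgebra homomorphism $(Q,\delta)\to\overline U(\Phi Q,\bar\delta)$, transporting the state recognising $L$ into $\overline U(\rho T_\Sigma)$.

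Combining the two inclusions gives $\overline U(\rho T_\Sigma)\cong(\Reg(\Sigma),\zeta)$ as $T_\Sigma^0$-coalgebras, yielding both the claimed carrier and the formulae for $\zeta^{\mathrm{1st}}$ and $\zeta^{\mathrm{2nd}}$. The main obstacle is the lifting step via $\Phi$: it quietly uses all three ingredients of the setup, namely local finiteness of $\C$ to keep $\Phi Q$ finite, the varietal nature of $\C$ to obtain the adjunction $\Phi\dashv U$, and the normalisation $|O_\C|=\mathbf{2}$ from Remark~\ref{rem:ocod} that ensures $U\circ T_\Sigma=T_\Sigma^0\circ U$. Relaxing the last of these would force the carrier to be the set of regular \emph{behaviours} $\Sigma^*\to|O_\C|$ rather than regular languages, as anticipated in that remark.
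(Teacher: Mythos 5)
Your argument is correct, but it takes a genuinely different route from the paper's proof. Both start from the same key observation that $|T_\Sigma|\cong T_\Sigma^0\circ|{-}|$, so that $|{-}|$ lifts to a functor between coalgebra categories and, via $\Phi\dashv|{-}|$, every classical finite automaton lifts to a finite automaton in $\C$ with carrier $\Phi Q$ (your surjectivity step is precisely the left adjoint $\CAut\Phi$ of this lifted adjunction applied to a finite $T^0_\Sigma$-coalgebra). From there the paper argues purely by terminality: the lifted adjunction restricts to the subcategories of \emph{locally finite} coalgebras because $\C$ is locally finite, and since $\rho T_\Sigma$ and $\rho T_\Sigma^0$ are the terminal objects of these two categories and right adjoints preserve terminal objects, $|{-}|$ must carry $\rho T_\Sigma$ to $\rho T_\Sigma^0=\Reg(\Sigma)$ --- no injectivity or surjectivity check, and no appeal to the final coalgebra. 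You instead work with the colimit presentation and identify $|\rho T_\Sigma|$ as a subcoalgebra of $\nu T_\Sigma^0=\mathbf{2}^{\Sigma^*}$; this is sound, but it imports an additional nontrivial fact, namely that the rational fixpoint embeds into the final coalgebra. That embedding does hold here (in a locally finite variety the finite, finitely generated and finitely presentable objects coincide), but it deserves an explicit citation rather than being labelled as ``known'', since it fails for general base categories. What your version buys is a concrete description of the comparison map as ``state $\mapsto$ accepted language'' and an explicit surjectivity witness; what the paper's version buys is brevity and independence from the existence and structure of $\nu T_\Sigma$.
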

In the light of this proposition we also write $\Reg(\Sigma)$ for the rational fixpoint $\rho T_\Sigma$.

\begin{example}
  For $\C = \CBA$, the rational fixpoint of $T_\Sigma$ is the boolean algebra $\Reg(\Sigma)$ (w.r.t $\cup$, $\cap$, $(-)^\complement$, $\emptyset$ and $\Sigma^*$), endowed with the
automata structure given by the boolean homomorphisms $\zeta^\mathrm{1st}$ and $\zeta^\mathrm{2nd}$. Similarly, for the other categories $\C$ of Example~\ref{ex:cd} the algebraic structure of $\rho T_\Sigma = \Reg(\Sigma)$ is
  \begin{inparaenum}[\itshape a\upshape)]
    \item $\cup$, $\cap$, $\emptyset$, and $\Sigma^*$ for $\C = \CDLat$;
    \item $\cup$ and $\emptyset$ for $\C = \CSLat$;
    \item symmetric difference $L\oplus L' = (L\setminus L')\cup (L'\setminus L)$ and $\emptyset$ for $\C = \mathbb{Z}_2$-$\CVec$.
  \end{inparaenum}
\end{example}

\begin{definition}
A \textbf{local variety of languages over $\boldsymbol{\Sigma}$ in $\boldsymbol{\C}$} is a
  subautomaton $V$ of $\rho T_\Sigma$ closed under right derivatives, i.e.\
  $L \in |V|$ implies $La^{-1}=\set{w\in\Sigma^*}{wa \in L}\in |V|$ for all $a \in
  \Sigma$.  The $\bigcap$-semilattices of all (finite) local
    varieties of languages over $\Sigma$ in $\C$ are denoted by $\FLang_\Sigma^f$ and $\FLang_\Sigma$, respectively.
\end{definition}

Observe that a local variety of languages is closed under%
\begin{inparaenum}[(i)]
  \item the $\C$-algebraic operations of $\rho T_\Sigma$, being a sub\emph{algebra} of $\rho T_\Sigma$ in $\C$, and
  \item left derivatives, being a sub\emph{coalgebra} of $\rho T_\Sigma$.  For
    $\C=\CDLat$ ($\C=\CBA$) a local variety of languages is precisely a
    \emph{(boolean) quotienting algebra of languages} in the sense of Gehrke
    et~al.~\cite{Gehrke2008}: a set of regular languages over $\Sigma$ closed
    under union, intersection (and complement)  as well as left and right
    derivatives.
\end{inparaenum}

\subsection{\texorpdfstring{$\boldsymbol{\D}$}{D}-monoids}

Every entropic variety~$\D$ of (ordered) algebras can be equipped
with a symmetric monoidal closed structure $(\D,\otimes,\mathbf{1}_\D)$, see~\cite{BN1976} and~\cite[Theorem 3.10.1]{Borceux1994}. The unit $\mathbf{1}_\D$ is the free one-generated algebra and $\otimes$ is the usual tensor product of algebras, giving rise to a natural bijection between morphisms
and bimorphisms in~$\D$:
\[
  \hom(A\otimes B, C) \cong 
  \bihom(A\times B, C).
\]
Recall that a \textbf{bimorphism} $f\colon A \times B \to C$ in $\D$ is a
set-theoretic function  from $A\times B$ to $C$ such that $f(a, -)\colon B\to C$
and $f(-, b)\colon A \to C$ are $\D$-morphisms for any $a \in A$ and $b
\in B$. 

Since the tensor product represents bimorphisms, the monoid objects of the monoidal category $(\D,\otimes,\mathbf{1}_\D)$ correspond to the following algebraic concept:

\begin{definition}
  A \textbf{$\boldsymbol{\D}$-monoid} $(M, \bullet, e)$ is an object $M$ of $\D$ equipped with a monoid structure $(|M|,
  \bullet, e)$ in $\CSet$ whose multiplication $\bullet\colon M\times M\to M$ is 
  a $\D$-bimorphism. By a \textbf{morphism} $f\colon (M,\bullet,e)\to
  (M',\bullet',e')$ of $\D$-monoids is meant a morphism $f\colon M\to M'$ of $\D$ that is also a monoid morphism between the underlying monoids in $\CSet$. By $\CMon_f \D$ and $\CMon\D$ we denote the categories of (finite) $\D$-monoids and all $\D$-monoid morphisms.
\end{definition}

\begin{example}\label{ex:monoids}
  For the categories $\D = \CSet$, $\CPos$, $\CSLat$ and $\mathbb{Z}_2$-$\CVec$
  of Example~\ref{ex:cd}, the $\D$-monoids
  are precisely ordinary monoids, ordered monoids, idempotent semirings (with $0$ and $1$) and associative algebras over the field $\mathbb{Z}_2$, respectively.
\end{example}

\begin{rem}
\begin{enumerate}
\item  In $\D$  we choose the factorisation system (epi, strong mono). Recall
  that epimorphisms in $\D$ are precisely the surjective morphisms by
  Assumption~\ref{asm:basic}.2. Strong monomorphisms are precisely the injective
  morphisms if $\D$ is a variety of algebras, and embeddings i.e.\ injective
  order-reflecting morphisms if $\D$ is a variety of ordered algebras. Hence
  every $\D$-morphism $f\colon A \to B$ factorises as
  $\xymatrix{A\ar@{->>}[r]^-{\mathrm{Im}(f)} & f[A]\ar@{ >->}[r]^i & B }$ where
  $\mathrm{Im}(f)$ is the restriction of $f$ to the image $f[A]$ and $i$ is
  injective (and order-reflecting).  Further, the factorisation system has the
  \textbf{fill-in property}: given a surjective morphism $e$, an injective (and
  order-reflecting) morphism $m$ and two morphisms $u, v$ with $ue = mv$, there
  is a unique morphism $d$ such that $u = md$ and $v = de$.

\item The factorisation system of $\D$ lifts to $\CMon\D$. Hence \textbf{submonoids} are represented by injective (order-reflecting) $\D$-monoid morphisms, and \textbf{quotient monoids} by surjective $\D$-monoid morphisms.
\end{enumerate}
\end{rem}


Since $\CMon\D$ is a variety of (ordered) algebras, the forgetful functor $\CMon\D\to\CSet$ has a left adjoint constructing free $\D$-monoids. Here is a concrete construction:

\begin{proposition}[see~\cite{Adamek2014c}]
The free $\D$-monoid on a set $\Sigma$ is carried by the $\D$-object $\Psi\Sigma^*$. The monoid multiplication $\bullet$ extends the concatenation of words in $\Sigma^*$, and the unit is $\epsilon$.
\end{proposition}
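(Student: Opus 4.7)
The strategy is to recognise $\Psi\Sigma^*$ as the image of the free monoid $\Sigma^*$ under a strong monoidal lifting of $\Psi$. First I would verify that $\Psi\colon(\CSet,\times,\mathbf{1})\to(\D,\otimes,\mathbf{1}_\D)$ is strong monoidal. The unit coherence $\Psi\mathbf{1}=\mathbf{1}_\D$ is by definition. For the binary coherence I would establish $\Psi(A\times B)\cong\Psi A\otimes\Psi B$ by Yoneda: for any $C\in\D$,
\[
  \hom(\Psi A\otimes\Psi B,C)\cong\bihom(\Psi A\times\Psi B,C)\cong\CSet(A\times B,|C|)\cong\hom(\Psi(A\times B),C).
\]
The middle step is where entropy (Assumption~\ref{asm:basic}.3) is used essentially: any function $A\times B\to|C|$ extends uniquely to a bimorphism precisely because the operations of $C$ applied in one variable commute with those applied in the other.

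Because $\Psi$ is strong monoidal it lifts to a functor $\overline\Psi\colon\CMon\to\CMon\D$ sending a monoid $(N,\cdot,e)$ to the $\D$-monoid with carrier $\Psi N$, multiplication $\Psi N\otimes\Psi N\cong\Psi(N\times N)\xrightarrow{\Psi(\cdot)}\Psi N$, and unit $\mathbf{1}_\D=\Psi\mathbf{1}\xrightarrow{\Psi(e)}\Psi N$. A routine diagram chase shows that $\overline\Psi$ is left adjoint to the underlying-monoid functor $\CMon\D\to\CMon$: the bijection $\D(\Psi N,M)\cong\CSet(N,|M|)$ coming from $\Psi\dashv|{-}|$ restricts to a bijection between $\D$-monoid morphisms $\overline\Psi N\to M$ and ordinary monoid morphisms $N\to|M|$, since the monoidal lift translates preservation of multiplication on one side into preservation on the other. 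Composing with the free-monoid adjunction between $\CSet$ and $\CMon$ yields a left adjoint $\Sigma\mapsto\overline\Psi(\Sigma^*)$ to the forgetful functor $\CMon\D\to\CSet$, whose underlying $\D$-object is $\Psi\Sigma^*$ with the multiplication and unit described in the statement.

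The only delicate point is the middle isomorphism in the display above: given $g\colon A\times B\to|C|$, extending $g$ first in the $A$-variable produces a map $\Psi A\times B\to C$, and one must check that this map is still a $\D$-morphism in the second variable. This compatibility is exactly the content of the entropic law; the rest is a formal consequence of strong monoidality and composition of adjunctions.
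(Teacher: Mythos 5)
Your argument is correct, and there is nothing in the paper to compare it against: the proposition is quoted from \cite{Adamek2014c} without proof, and the monoidal structure it rests on is attributed to Banaschewski and Nelson \cite{BN1976}. Your route is exactly the one underlying those sources: the entropic law of Assumption~\ref{asm:basic} makes $\Psi$ strong monoidal from $(\CSet,\times,\mathbf{1})$ to $(\D,\otimes,\mathbf{1}_\D)$, strong monoidal functors preserve monoid objects, the adjunction $\Psi\dashv|{-}|$ lifts to the categories of monoids, and composing with the free-monoid adjunction on $\CSet$ exhibits $\Psi\Sigma^*$ with the concatenation-induced multiplication as the free $\D$-monoid on $\Sigma$. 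Two small remarks. First, your description of the delicate step is slightly off: after extending $g\colon A\times B\to|C|$ in the $A$-variable one has a map $\Psi A\times B\to C$ whose second argument still ranges over the \emph{set} $B$, so there is nothing to check there yet; the entropic law is needed only after the second extension, to see that $x\mapsto f(x,\tau(y_1,\ldots,y_n))=\tau(f(x,y_1),\ldots,f(x,y_n))$ is again a $\D$-morphism, which holds precisely because the operations of $\D$ are themselves homomorphisms. Second, lifting $\Psi$ to monoid objects strictly requires the coherence axioms for the isomorphisms $\Psi(A\times B)\cong\Psi A\otimes\Psi B$; since your bijections are natural and the isomorphisms are determined by their values on generators, coherence is automatic, but it deserves a sentence. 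Likewise, the assertion that the bijection $\D(\Psi N,M)\cong\CSet(N,|M|)$ restricts to monoid morphisms should be justified by the uniqueness of bimorphism extensions from generators, which you have already established. None of these points is a gap.
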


A \textbf{finite $\boldsymbol{\Sigma}$-generated $\boldsymbol{\D}$-monoid} is a
finite quotient $e_M\colon \Psi\Sigma^*\epito M$ of the free $\D$-monoid on
$\Sigma$. Given another finite $\Sigma$-generated $\D$-monoid $e_N\colon
\Psi\Sigma^*\twoheadrightarrow N$ we write $M\leq N$ if there is a $\D$-monoid
morphism $f\colon N \to M$ satisfying $e_M = fe_N$. With respect to this order all
(isomorphism classes of) finite $\Sigma$-generated $\D$-monoids form a poset
$\Quo(\Psi\Sigma^*)$. Observe that  $\Quo(\Psi\Sigma^*)$ is a join-semilattice:
the join of $M$ and $N$ is the \textbf{subdirect product}, viz.\ the image of the
morphism $\<e_M, e_N\>\colon \Psi\Sigma^*\to M\times N$ given by
\[
  M \vee N \defeq \set{ (e_M(x), e_N(x)) \in M \times N }{ x \in \Psi\Sigma^* }.
\]

\begin{definition}
  A \textbf{local pseudovariety of $\boldsymbol{\D}$-monoids over $\boldsymbol{\Sigma}$} is an
  \emph{ideal} of $\Quo(\Psi\Sigma^*)$, i.e.\ a set of finite $\Sigma$-generated
  $\D$-monoids closed under quotients and subdirect products. By $\FVar_\Sigma$
  we denote the $\bigcap$-semilattice of local pseudovarieties of $\D$-monoids over $\Sigma$.
\end{definition}

\begin{theorem}[{General Local Variety Theorem~\cite{Adamek2014c}}]\label{thm:genlocal}
  \label{thm:local-eilenberg-thm}
  For each finite alphabet $\Sigma$,
   \[\FLang^f_\Sigma \cong \Quo(\Psi\Sigma^*)\quad\text{and}\quad \FLang_\Sigma \cong \FVar_\Sigma.\]
\end{theorem}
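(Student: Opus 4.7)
My plan is to establish the first isomorphism $\FLang^f_\Sigma \cong \Quo(\Psi\Sigma^*)$ by dualising finite subcoalgebras of $\Reg(\Sigma)$ to finite $\D$-monoid quotients of $\Psi\Sigma^*$, and to deduce the second isomorphism $\FLang_\Sigma \cong \FVar_\Sigma$ by passing to ideal completions on both sides.

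First, I would lift the dual equivalence $\C_f \cong \D_f^\op$ from objects to coalgebras. Since $S$ sends products to coproducts and $SO_\C \cong \mathbf{1}_\D$ (as $O_\C = P\mathbf{1}_\D$ and $SP \cong \id$ on finite objects), the endofunctor $T_\Sigma = O_\C \times (-)^\Sigma$ on $\C_f$ dualises to $F_\Sigma := \mathbf{1}_\D + \Psi\Sigma \otimes (-)$ on $\D_f$, where the $\Sigma$-fold coproduct is packaged via the entropic tensor. An $F_\Sigma$-algebra on $M \in \D$ consists of a point $\mathbf{1}_\D \to M$ together with a $\Sigma$-indexed family of $\D$-endomorphisms of $M$; the initial such algebra is $\Psi\Sigma^*$, pointed by the empty word and acted on by left multiplication with letters. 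Since $\rho T_\Sigma = \Reg(\Sigma)$ is the filtered colimit of all finite $T_\Sigma$-coalgebras, its finite subcoalgebras in $\C$ correspond bijectively under $S$ to finite $F_\Sigma$-algebra quotients of $\Psi\Sigma^*$.

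Second, I would identify the right derivative $L \mapsto La^{-1}$ on $\Reg(\Sigma)$ as the dual of right multiplication by $a \in \Sigma$ in $\Psi\Sigma^*$: under the evaluation pairing $\Reg(\Sigma) \times \Psi\Sigma^* \to |O_\D|$ induced by the dualising objects, the right action of $a$ on $\Psi\Sigma^*$ transposes precisely to the operation $(-)a^{-1}$. Hence a subcoalgebra $V \subseteq \Reg(\Sigma)$ is closed under right derivatives iff the corresponding $F_\Sigma$-algebra quotient of $\Psi\Sigma^*$ is additionally compatible with right multiplication by letters. Combined with the already-present pointed left action, this is equivalent to the quotient descending to a $\D$-monoid quotient of $\Psi\Sigma^*$, yielding the bijection $\FLang^f_\Sigma \cong \Quo(\Psi\Sigma^*)$. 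Order-preservation then follows from the functoriality of $S$ and $P$: a larger subvariety corresponds to a finer kernel on $\Psi\Sigma^*$, hence to a larger quotient in the sense of $\Quo$.

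For the second isomorphism $\FLang_\Sigma \cong \FVar_\Sigma$, I would observe that every local variety $V \in \FLang_\Sigma$ is the directed union of its finite local subvarieties, which form an ideal in $\FLang^f_\Sigma$; conversely every ideal of $\FLang^f_\Sigma$ arises this way by taking unions. Transporting along the first isomorphism then yields a bijection between $\FLang_\Sigma$ and the ideals of $\Quo(\Psi\Sigma^*)$, which by definition is $\FVar_\Sigma$. The main technical obstacle, I expect, lies in the second step: carefully verifying that the transpose of right multiplication under the dualising-object pairing really is the right derivative, and that this closure condition matches exactly the right-action compatibility on the dual side. This is where the coalgebraic and monoidal sides of the duality meet most delicately, and it requires a detailed unwinding of how $P$ and $S$ interact with the tensor product and the monoid structure of $\Psi\Sigma^*$.
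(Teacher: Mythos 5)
Your proposal matches the paper's own account of this theorem (a citation of~\cite{Adamek2014c} together with the sketch in Remark~\ref{rem:locvariso}): dualise the finite coalgebra structure under $S$ to an algebra for $F_\Sigma=\mathbf{1}_\D+\coprod_\Sigma(-)$ with initial algebra $\Psi\Sigma^*$, obtain a surjective $e_{SV}\colon\Psi\Sigma^*\epito SV$, use closure under right derivatives to upgrade this $F_\Sigma$-algebra quotient to a $\D$-monoid quotient, and derive $\FLang_\Sigma\cong\FVar_\Sigma$ by passing to ideal completions. This is essentially the same approach, at a comparable level of detail.
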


\begin{rem}\label{rem:locvariso}
\begin{enumerate}
\item The first isomorphism takes a finite local variety $O_\C
  \xleftarrow{\gamma^\mathrm{1st}} V \xrightarrow{\gamma^\mathrm{2nd}} V^\Sigma$
  in $\C$ and applies the equivalence functor $S\colon \C_f\xrightarrow{\cong}
  \D_f^{op}$ to its coalgebra structure. This yields an algebra $\mathbf{1}_\D
  \cong S(O_\C) \xrightarrow{S\gamma^\mathrm{1st}} SV
  \xleftarrow{S\gamma^\mathrm{2nd}} S(V^\Sigma) \cong \coprod_\Sigma SV$ for the
  functor $F_\Sigma = \mathrm{1}_\D + \coprod_\Sigma(-)$ on $\D$. Since the free
  $\D$-monoid $\Psi\Sigma^*$ also carries the initial algebra for $F_\Sigma$,
  there is a unique $F_\Sigma$-algebra homomorphism $e_{SV}\colon
  \Psi\Sigma^*\to SV$ into the algebra constructed above. One then shows that
  $e_{SV}$ is surjective and there is a unique $\D$-monoid structure on $SV$
  making $e_{SV}$ a $\D$-monoid morphism.  We call $e_{SV}\colon
  \Psi\Sigma^*\epito SV$ the \textbf{(finite $\boldsymbol{\Sigma}$-generated)
    $\boldsymbol{\D}$-monoid corresponding to $\boldsymbol{V}$}.

\item The second isomorphism follows immedatiely from the observation that
  $\FLang_\Sigma$ is isomorphic to the ideal completion of $\FLang_\Sigma^f$.
  Indeed, every finite local variety of languages is a compact element of
  $\FLang_\Sigma$, and every local variety is the directed union of its finite
  local subvarieties. Hence the isomorphism $\FLang_\Sigma \cong \FVar_\Sigma$
  maps a local variety of languages $V\hookrightarrow \rho T_\Sigma$ to the
  local pseudovariety of all finite $\Sigma$-generated $\D$-monoids that
  correspond to some finite local subvariety of $V$. The inverse isomorphism
  maps a local pseudovariety $P$ of $\D$-monoids over $\Sigma$ to the directed
  union of all finite local varieties of languages in $\C$ that correspond to
  some element of $P$.
\end{enumerate}
\end{rem}

\subsection{Preimages under \texorpdfstring{$\boldsymbol{\D}$}{D}-monoid morphisms}
Recall from Remark~\ref{rem:ocod} that we assume $|O_\C|=|O_\D|=\mathbf{2}$.
Hence a language $L \subseteq \Delta^*$ may be identified with a morphism
$L\colon \Psi\Delta^*\to O_\D$ of $\D$, viz.\ the adjoint transpose of the
characteristic function $\Delta^* \to |O_\D|$.  Given this identification, the
\textbf{preimage} of $L$ under a $\D$-monoid morphism $f\colon \Psi\Sigma^* \to
\Psi\Delta^*$ is the composite $Lf\colon \Psi\Sigma^* \to\Psi\Delta^* \to O_\D$.
By the adjunction $S \dashv P \colon \D^\op\to\C$, the morphism $Pf$ is
essentially the preimage function, because \[|Pf| \cong \D(f, O_\D)\colon
  \D(\Psi\Delta^*,O_\D) \to \D(\Psi\Sigma^*,O_\D).\] In~\cite{Adamek2015} it was
shown that $|Pf|$ restricts to a $\C$-morphism $f^{-1}\colon \Reg(\Delta)\to
\Reg(\Sigma)$, taking any language $L\colon\Psi\Delta^*\to O_\D$ in
$\Reg(\Delta)$ to its $f$-preimage. This observation makes the following
definition evident:
\begin{definition}\label{def:preimg}
 Let $f\colon \Psi\Sigma^* \to \Psi\Delta^*$ be a $\D$-monoid morphism and $V$
 and~$W$ local varieties of languages over $\Sigma$ and $\Delta$, respectively.
 Then $V$ is said to be \textbf{closed under $\boldsymbol{f}$-preimages of
   languages in $\boldsymbol{W}$} if Diagram~\ref{eq:closed-under-preimages}
 below commutes for some $\C$-morphism $h$.
\end{definition}

\begin{figure}[h!]
  \renewcommand{\figurename}{Diagram}
  \centering
  \captionsetup{justification=centering}
  \begin{minipage}{0.3\textwidth}
  \[
     \vcenter{
      \xymatrix{
        W \ar@{ >->}[d]\ar[r]^{h} & V \ar@{ >->}[d]\\
        \Reg(\Delta) \ar[r]_{f^{-1}} & \Reg(\Sigma)
      }
    }
  \]
  \caption{}
  \label{eq:closed-under-preimages}
  \end{minipage}
  \begin{minipage}{0.05\textwidth}
  ~
  \end{minipage}
  \begin{minipage}{.3\textwidth}
  \[
     \vcenter{
      \xymatrix@C+1em{
        \Psi\Sigma^* \ar[r]^{f} \ar@{->>}[d]_{e_M} & \Psi\Delta^*
        \ar@{->>}[d]^{e_N} \\
        M \ar[r]_{g} & N
      }
    }
  \]
  \caption{}
  \label{eq:closed-under-preimage2}
  \end{minipage}
\end{figure}
Here is a dual characterisation of preimage closure:
\begin{lemma}[see~\cite{Adamek2015}]
  \label{lem:key-lemma}
  In Definition~\ref{def:preimg} let $V$ and $W$ be finite, and let $e_M\colon
  \Psi\Sigma^*\epito M$ and $e_N\colon \Psi\Delta^*\epito N$ be the finite $\D$-monoids corresponding to $V$ and $W$, respectively. Then  Diagram~\ref{eq:closed-under-preimages} commutes iff Diagram~\ref{eq:closed-under-preimage2} with $g=Sh$ commutes.
\end{lemma}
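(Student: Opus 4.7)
The plan is to reduce both diagrams to equivalent element-level statements, using the equivalence $S \colon \C_f \xrightarrow{\cong} \D_f^{\op}$ and the set bijections $|V| \cong \D(M, O_\D)$, $|W| \cong \D(N, O_\D)$ induced by $V \cong PM$ and $W \cong PN$. First, I would describe how each morphism in Diagram~\ref{eq:closed-under-preimages} acts on underlying sets. Tracking the correspondence in Remark~\ref{rem:locvariso}.1, the inclusion $V \hookrightarrow \Reg(\Sigma)$ is set-theoretically precomposition with $e_M$: a language lies in $V$ iff it equals $K \circ e_M$ for a (unique) $K \in \D(M, O_\D)$, and similarly for $W \hookrightarrow \Reg(\Delta)$. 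Under the duality bijection $\C_f(PN, PM) \cong \D_f(M, N)$, the morphism $h \colon W \to V$ corresponds to $Sh \colon M \to N$, so its underlying function is $K \mapsto K \circ Sh$. Finally, $f^{-1}$ acts as $L \mapsto L \circ f$ by Definition~\ref{def:preimg}.

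With these identifications in place, and using that the forgetful functor $\C \to \CSet$ is faithful ($\C$ being a variety), commutativity of Diagram~\ref{eq:closed-under-preimages} in $\C$ translates to the family of equations
\[
  K \circ (Sh \circ e_M) \;=\; K \circ (e_N \circ f)
  \qquad \text{in } \D(\Psi\Sigma^*, O_\D), \text{ for every } K \in \D(N, O_\D).
\]
Commutativity of Diagram~\ref{eq:closed-under-preimage2}, which reads $Sh \circ e_M = e_N \circ f$, trivially implies this family. For the converse direction I would use that $O_\D = S\mathbf{1}_\C$ is a cogenerator of $\D_f$ (as $\mathbf{1}_\C$ generates $\C_f$ and $S$ is an equivalence $\C_f \cong \D_f^{\op}$); because $N$ is finite, this cogenerator property is enough to separate the elements of $|N|$, and therefore any two $\D$-morphisms $\alpha,\beta \colon \Psi\Sigma^* \to N$ agreeing after postcomposition with every $K$ must already coincide. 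This yields $Sh \circ e_M = e_N \circ f$ in $\D$, and a short argument exploiting the surjectivity of $e_M$ shows that $Sh$ automatically preserves multiplication and unit, so $g \defeq Sh$ is a bona fide $\D$-monoid morphism and Diagram~\ref{eq:closed-under-preimage2} lives in $\CMon\D$ as required.

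The main obstacle is the first step: pinning down the set-level description of $V \hookrightarrow \Reg(\Sigma)$ as precomposition with $e_M$. Once this recognition principle is in hand, the rest is a routine interplay between the duality $\C_f \cong \D_f^{\op}$ and the cogenerator property of $O_\D$.
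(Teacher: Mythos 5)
Your plan is sound, and it follows the only natural route: translate both squares into equations among $\D$-morphisms $\Psi\Sigma^*\to O_\D$ and then use that $O_\D=S\mathbf{1}_\C$ is a cogenerator to cancel the test morphisms $K$. Note, though, that this paper does not prove Lemma~\ref{lem:key-lemma} at all --- it is imported from~\cite{Adamek2015} --- so there is no in-text proof to compare against; I can only assess your argument on its own terms. On those terms the reduction to the family $K\circ Sh\circ e_M=K\circ e_N\circ f$, the separation argument (valid because $N$ is finite and $\mathbf{1}_\D$ is a finite free algebra, so morphisms into $O_\D$ separate points of $N$), and the observation that surjectivity of $e_M$ upgrades $Sh$ to a $\D$-monoid morphism are all correct. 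The one place where your write-up carries real debt is exactly the step you flag: the claim that, under $V\cong PM$, the subcoalgebra inclusion $V\hookrightarrow\Reg(\Sigma)$ becomes precomposition with $e_M$. This is not a formality: $\Reg(\Sigma)=\rho T_\Sigma$ is only the locally finite part of $P\Psi\Sigma^*$, so one must check that $Pe_M\colon PM\to P\Psi\Sigma^*$ factors through $\rho T_\Sigma$ and that the resulting map is the given inclusion --- which follows from terminality of $\rho T_\Sigma$ in $\CAut_{\mathit{lf}}\Sigma$ once one knows $Pe_M$ is a $T_\Sigma$-coalgebra homomorphism, i.e.\ that $P$ turns the $F_\Sigma$-algebra structure of Remark~\ref{rem:locvariso} back into the original coalgebra structure on $V$. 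Spelling that out (it is the content of the construction in~\cite{Adamek2014c}) would complete the proof; as written it is a correctly identified but still unproven recognition principle.
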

\section{Fibrations for Languages and Monoids}
\label{sec:duality}
We are ready to present our fibrational setting for (local) varieties of languages in $\C$ and (local) pseudovarieties of $\D$-monoids. For general information on fibred categories the reader is referred to~\cite{Jacobs1999}. Let us briefly recall some basic vocabulary:

\begin{definition}
  Let $p\colon \E \to\B$ be a functor.  
  \begin{enumerate}
  \item An object
    $X\in\E$ is \textbf{above} $I\in\B$ if $pX = I$, and
    similarly a morphism $f$ in~$\E$ is above a morphism~$u$
    in~$\B$ if $pf = u$. A morphism~$f$ is called \textbf{vertical}
    (over~$I$) if it is above an identity map (above $\id_I$).
    \item The \textbf{fibre} over $I\in\B$ is the subcategory $\E_I$ of $\E$ whose objects are the objects of $\E$ above $I$ and whose morphisms are the
      vertical morphisms over~$I$.
    \item A morphism $f\colon X\to Y$ of $\E$ is \textbf{opcartesian over}
      $u\colon I\to J$ in $\B$ if~$pf = u$ and for
      every morphism $g\colon X\to Z$ in~$\E$ above $wu$ for 
      $w\colon J\to pZ$, there is a unique morphism $h\colon Y\to Z$ above
      $w$ with $g=hf$.
    \item $p\colon\E\to\B$ is an \textbf{opfibration}
      over~$\B$ if
        for every $X \in \E$ and $u\colon pX\to J$ in $\B$ there is an
        opcartesian morphism $f\colon X\to Y$ above $u$, called an
        \textbf{opcartesian lifting} of $u$.
    \item Two opfibrations $p\colon \E\to\B$ and $p'\colon \E'\to\B$ are
      \textbf{isomorphic} is there is an isomorphism $i\colon \E\cong\E'$ preserving indices, that is, $p'i=p$. 
    \item A \textbf{global section} of an obfibration $p\colon \E\to\B$ is a
      functor $s\colon \B\to\E$ with $es=id$.
    \item A \textbf{poset opfibration} is an opfibration such that each
      fibre $\E_I$ ($I\in\B$) is a poset.
     \item A \textbf{$\boldsymbol{\B}$-indexed poset} is a functor
        $\mathcal{H}\colon\B\to\CPos$.
  \end{enumerate}
\end{definition}

All opfibrations we consider below are poset opfibrations. They are effectively interchangeable with indexed posets via the \textbf{Grothendieck construction}:
\begin{enumerate}
\item Given a poset opfibration $p\colon E\to\B$ one defines an
indexed poset $\mathcal{H}_p\colon\B\to\CPos$ as follows. Note first that every $\B$-morphism $I \xrightarrow{u} J$
with an
object~$X$ above~$I$ has a \emph{unique} opcartesian lifting $X
\xrightarrow{f} u^* X$ because $\E_J$ is a poset. Then $\mathcal{H}_p$ is defined by
\[
  I \mapsto \E_I \quad\text{and}\quad
  \left(I \xrightarrow{u} J\right) \mapsto \left(\E_I \xrightarrow{u^*}
    \E_J\right)
\]
where $u^*$ maps $X$ to $u^*X$.
\item Conversely, given an indexed poset
$\mathcal{H}\colon \B\to \CPos$, define the \textbf{Grothendieck completion} of
$\mathcal{H}$ to be the category $\int\mathcal{H}$ with
\begin{quote}
  \begin{description}
    \item[objects] $(I, x)$ where $I \in \B$ and $x \in
      \mathcal{H}I$;
    \item[morphisms] $(I, x) \xrightarrow{u} (J, y)$ where $I \xrightarrow{u} J$ is a
            morphism in~$\B$ with $\mathcal{H}u(x)
                  \leq_{\mathcal{H}J} y$.
  \end{description}
\end{quote}
Then the projection functor $p_\mathcal{H}\colon\int\mathcal{H}\to\B$ mapping $(I, x)$
to $I$ and $(I, x)\xrightarrow{u}(J, y)$ to $I\xrightarrow{u}J$ is an opfibration.
\end{enumerate}
The Grothendieck construction gives rise to an equivalence between suitable
$2$-categories of indexed posets and opfibrations. We only need the following
weaker statement:
\begin{theorem}[Grothendieck]
  \label{thm:equivalence-between-fibrations-and-indexed-categories}
 Every poset opfibration $p\colon\E\to\B$ is isomorphic to
 $p_{\mathcal{H}_p}\colon\E\to\B$, and every indexed poset $\mathcal{H}\colon
 \B\to\CPos$ is naturally isomorphic to $\mathcal{H}_{p_\mathcal{H}}\colon \B\to
 \CPos$. Furthermore, if $\mathcal{H}, \mathcal{H}'\colon \B\to \CPos$ are two
 naturally isomorphic indexed posets then $p_\mathcal{H},p_{\mathcal{H}'}$ are
 isomorphic opfibrations.
\end{theorem}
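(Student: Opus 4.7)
The plan is to treat the three assertions in order, constructing an explicit isomorphism in each case and letting the posetal hypothesis on fibres do all the heavy lifting, since it renders opcartesian liftings unique.

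For the first statement I would define $i\colon \E \to \int\mathcal{H}_p$ by $X\mapsto (pX, X)$ on objects. Given $f\colon X\to Y$ above $u\colon I\to J$, the unique opcartesian lifting $\ell\colon X\to u^*X$ factorises $f = \bar f \circ \ell$ for a unique vertical $\bar f\colon u^*X\to Y$ in $\E_J$; the existence of $\bar f$ is precisely the inequality $u^*X \leq Y$, i.e.\ the condition for $(I,X)\xrightarrow{u}(J,Y)$ to be a morphism of $\int\mathcal{H}_p$. So $i$ sends $f$ to this morphism. Conversely, from a morphism $(I,X)\xrightarrow{u}(J,Y)$ of $\int\mathcal{H}_p$ one recovers an arrow in $\E$ by composing the witness $u^*X\to Y$ with $\ell$. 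The universal property of opcartesian arrows makes these two constructions mutually inverse and functorial, and $p_{\mathcal{H}_p}\circ i = p$ holds by construction.

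For the second statement, the fibre $(\int\mathcal{H})_I$ is canonically the poset $\mathcal{H}I$ under $(I,x)\leftrightarrow x$; the only nontrivial point is that $(I,x)\xrightarrow{u}(J,\mathcal{H}u(x))$ is opcartesian in $\int\mathcal{H}$, which unwinds to the observation that any arrow $(I,x)\xrightarrow{wu}(K,z)$ satisfies $\mathcal{H}w(\mathcal{H}u(x))\leq z$ and therefore factors uniquely through $(J,\mathcal{H}u(x))$. Hence the reindexing $u^*$ for $p_\mathcal{H}$ agrees with $\mathcal{H}u$, yielding the desired natural isomorphism. For the third statement, a natural isomorphism $\alpha\colon \mathcal{H}\to\mathcal{H}'$ induces a functor $\int\mathcal{H}\to\int\mathcal{H}'$ by $(I,x)\mapsto (I,\alpha_I x)$, preserving the $\B$-component of arrows: the compatibility condition $\mathcal{H}u(x)\leq y$ transports via naturality to $\mathcal{H}'u(\alpha_I x)=\alpha_J(\mathcal{H}u(x))\leq \alpha_J(y)$, and $\alpha^{-1}$ furnishes the inverse functor, so $p_\mathcal{H}\cong p_{\mathcal{H}'}$ over $\B$.

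The only real obstacle I foresee is bookkeeping: keeping clear which arrow lies above which base morphism, and verifying that two opcartesian liftings of the same $u$ at the same source must coincide. The antisymmetry of each $\E_J$ makes this coincidence automatic, so once the notation is set up the verifications reduce to straightforward applications of the universal property of opcartesianness together with the factorisation $f = \bar f\circ \ell$.
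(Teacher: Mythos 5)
Your proof is correct: the posetal hypothesis indeed makes opcartesian liftings and vertical factorisations unique, so the three explicit isomorphisms you construct are exactly what is needed, and the remaining functoriality checks (e.g.\ that composites of opcartesian arrows are opcartesian, so that $(vu)^*=v^*u^*$) are routine as you say. The paper itself offers no proof of this statement, citing it as a standard instance of the Grothendieck correspondence, and your argument is precisely the standard one it implicitly relies on.
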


\subsection{Local pseudovarieties of \texorpdfstring{$\boldsymbol{\D}$}{D}-monoids as an opfibration}
In this section we organise the local pseudovarieties of $\D$-monoids into an opfibration $\FVar\to \Free(\CMon\D)$, or equivalently into an indexed poset $\Free(\CMon\D)\to \CPos$. The base category $\Free(\CMon\D)$ is the category of finitely generated free $\D$-monoids: its objects are finite sets $\Sigma$, and its morphisms $\Sigma\xrightarrow{f}\Delta$ are all $\D$-monoid morphisms $\Psi\Sigma^* \xrightarrow{f} \Psi\Delta^*$ between the free $\D$-monoids on $\Sigma$ and $\Delta$, respectively. Hence $\Free(\CMon\D)$ is dual to the Lawvere theory of the variety $\CMon\D$.

\begin{definition}\label{def:fvarindexed}
The indexed poset $(\mathord{-})_\sharp\colon \Free(\CMon\D) \to \CPos$ is defined as follows:
\begin{enumerate}
\item To each finite set $\Sigma$ it assigns the poset $\Sigma_\sharp =
  \FVar_\Sigma$ of all local pseudovarieties of $\D$-monoids over $\Sigma$,
  ordered by \emph{reverse} inclusion $\supseteq$.
\item To each $\D$-monoid morphism $f\colon \Psi\Sigma^*\to\Psi\Delta^*$ it
  assigns the monotone map $f_\sharp\colon \FVar_\Sigma\to\FVar_\Delta$, where
  for $P\in \FVar_\Sigma$ the local pseudovariety $f_\sharp(P)\in \FVar_\Delta$
  consists of all finite $\Delta$-generated $\D$-monoids $N$ with $e_Nf =ge_M$
  for some $M\in P$ and some morphism $g$; see Diagram~\ref{eq:closed-under-preimage2}.
\end{enumerate}
\end{definition}

\begin{lemma}\label{lem:hwelldef}
$(\mathord{-})_\sharp$ is a well-defined functor.
\end{lemma}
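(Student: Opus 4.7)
The lemma requires showing: (i) for each $P \in \FVar_\Sigma$ the set $f_\sharp(P)$ is a local pseudovariety of $\D$-monoids over $\Delta$; (ii) the map $f_\sharp \colon \FVar_\Sigma \to \FVar_\Delta$ is monotone with respect to reverse inclusion; and (iii) $(\id_{\Psi\Sigma^*})_\sharp = \id$ and $(hf)_\sharp = h_\sharp \circ f_\sharp$ for composable $\D$-monoid morphisms $f\colon\Psi\Sigma^*\to\Psi\Delta^*$ and $h\colon\Psi\Delta^*\to\Psi\Gamma^*$. My plan is to dispatch (i) and (ii) by direct diagram chasing, and to reduce the non-trivial direction of the composition law in (iii) to the fill-in property of the (epi, strong mono) factorisation system of $\CMon\D$.

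For (i), closure of $f_\sharp(P)$ under quotients is immediate: if $N\in f_\sharp(P)$ is witnessed by $e_N\circ f = g\circ e_M$ with $M\in P$, and $q\colon N\epito N'$ satisfies $e_{N'} = q\circ e_N$, then $e_{N'}\circ f = (qg)\circ e_M$, so $N'\in f_\sharp(P)$. For closure under subdirect products, given $N_1,N_2\in f_\sharp(P)$ witnessed by $M_i\in P$ and $g_i\colon M_i\to N_i$, I would take $M\defeq M_1\vee M_2\in P$ (which lies in $P$ by closure under subdirect products); the morphism $g_1\times g_2\colon M_1\times M_2\to N_1\times N_2$ restricts to a $\D$-monoid morphism $M\to N_1\vee N_2$ because both subdirect products are defined pointwise along $e_M$ and $e_{N_1\vee N_2}$, and a short check produces the requisite commuting triangle. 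Monotonicity (ii) is a direct reading of the definition, and the identity law in (iii) holds because $(\id)_\sharp(P)$ is precisely the set of quotients of members of $P$, which coincides with $P$ by closure under quotients.

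The main step is the composition law $(hf)_\sharp = h_\sharp\circ f_\sharp$. The inclusion $h_\sharp(f_\sharp(P))\subseteq (hf)_\sharp(P)$ follows by composing the witnessing triangles from Diagram~\ref{eq:closed-under-preimage2}. For the reverse inclusion, given $K\in (hf)_\sharp(P)$ witnessed by $e_K\circ hf = g\circ e_M$ with $M\in P$, I would image-factorise $e_K\circ h\colon\Psi\Delta^*\to K$ as an epi $e_N\colon\Psi\Delta^* \epito N$ followed by a strong mono $i\colon N \monoto K$, producing a candidate intermediate finite $\Delta$-generated $\D$-monoid $N$, and then image-factorise $e_N\circ f$ as $e_{M^*}\colon\Psi\Sigma^* \epito M^*$ followed by $j\colon M^* \monoto N$. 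Substituting into the witness for $K$ yields the commuting square $(ij)\circ e_{M^*} = g\circ e_M$, to which the fill-in property applies, producing a diagonal $d\colon M\to M^*$ with $d\circ e_M = e_{M^*}$. Thus $M^*\leq M$ in $\Quo(\Psi\Sigma^*)$, so $M^*\in P$ by closure under quotients; this establishes $N\in f_\sharp(P)$ (via $j$) and $K\in h_\sharp(f_\sharp(P))$ (via $i$). The principal obstacle is precisely this step: the witness $M$ for $K\in (hf)_\sharp(P)$ need not factor through the intermediate $N$ directly, and the fill-in property is the mechanism that manufactures a replacement witness $M^*$ genuinely below $M$ in the quotient order.
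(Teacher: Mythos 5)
Your proposal is correct and follows essentially the same route as the paper: closure of $f_\sharp(P)$ under quotients and subdirect products, monotonicity, the identity law via quotient-closure, and the composition law via image factorisation plus the fill-in property. The only cosmetic difference is in the hard inclusion $(hf)_\sharp(P)\subseteq h_\sharp(f_\sharp(P))$, where the paper factorises only $e_K\circ h$ and applies the fill-in to produce a map $M\to N$ directly (keeping $M\in P$ as the witness), while you perform a second factorisation and replace $M$ by a quotient $M^*\in P$ --- both are valid instances of the same fill-in argument.
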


The Grothendieck construction applied to the indexed poset
$(\mathord{-})_\sharp\colon \Free(\CMon\D)\to \CPos$ yields the following equivalent opfibration:

\begin{definition}\label{def:fvar}
The category $\FVar$ of local pseudovarieties of $\D$-monoids has
\begin{quote}
  \begin{description}
    \item[objects] $(\Sigma, P)$ where $P$ is a local pseudovariety of $\D$-monoids over
      $\Sigma$;
    \item[morphisms] $(\Sigma, P) \xrightarrow{f} (\Delta, Q)$
      where $f\colon \Psi\Sigma^*\to\Psi\Delta^*$ is a $\D$-monoid morphism such
      that for every $N \in Q$  there exists $M \in P$  and $g\colon M \to N$
      subject to Diagram~\ref{eq:closed-under-preimage2}.
\end{description}
\end{quote}
The projection $\FVar \xrightarrow{q}\Free(\CMon\D)$ mapping $(\Sigma, P)$ to
$\Sigma$ and $(\Sigma, P) \xrightarrow{f} (\Delta, Q)$ to~$f$ is called the
\textbf{opfibration of local pseudovarieties of $\boldsymbol{\D}$-monoids}.

\end{definition}
\subsection{Local varieties of languages in
  \texorpdfstring{$\boldsymbol{\C}$}{C} as an opfibration}
In complete analogy to Definition~\ref{def:fvarindexed} and~\ref{def:fvar} we can define an indexed poset and its corresponding opfibration representing local varieties of languages in $\C$.

\begin{definition}\label{def:varindexed}
The indexed poset $(\mathord{-})_*\colon \Free(\CMon\D) \to \CPos$ is defined as follows:
\begin{enumerate}
\item To each finite set $\Sigma$ it assigns the poset $\Sigma_*=\FLang_\Sigma$ of all local varieties of languages over $\Sigma$ in $\C$, ordered by \emph{reverse} inclusion $\supseteq$.
\item To each $\D$-monoid morphism $f\colon \Psi\Sigma^*\to\Psi\Delta^*$ it assigns the monotone map
$f_*\colon \FLang_\Sigma\to\FLang_\Delta$, where for $V\in \FLang_\Sigma$ the local variety $f_*(V)\in \FLang_\Delta$ is the directed union of all local varieties $W$ satisfying Diagram~\ref{eq:closed-under-preimages} for some $h$. In other words, $f_*(V)$ is the \emph{largest} local variety of languages over $\Delta$ such that $V$ is closed under $f$-preimages of languages in~$f_*(V)$.
\end{enumerate}
\end{definition}

The Grothendieck construction gives the following opfibration:

\begin{definition}\label{def:flang}
  The category $\FLang$ of local varieties of languages in $\C$ has
  \begin{quote}
  \begin{description}
    \item[objects] $(\Sigma, V)$ where $V$ is a local variety of languages over $\Sigma$ in $\C$;
    \item[morphisms] $(\Sigma, V)\xrightarrow{f}(\Delta, W)$ 
      where $f\colon\Psi\Sigma^*\to\Psi\Delta^*$ is a $\D$-monoid morphism such that
      $V$ is closed under $f$-preimages of languages in~$W$.
  \end{description}
  \end{quote}
  The projection $\FLang \xrightarrow{p} \Free(\CMon\D)$ mapping $(\Sigma,
  V)$ to $\Sigma$ and $(\Sigma, V) \xrightarrow{f} (\Delta, W)$ to $f$ is called
  the \textbf{opfibration of local varieties of languages in $\boldsymbol{\C}$}.
\end{definition}

The General Local Variety Theorem (see Theorem~\ref{thm:genlocal}) implies that
the two indexed posets $(\mathord{-})_\sharp,(\mathord{-})_*\colon
\Free(\CMon\D)\to \CPos$ of Definition~\ref{def:fvarindexed}
and~\ref{def:varindexed} are naturally isomorphic. Indeed, recall from
Remark~\ref{rem:locvariso} that the isomorphism $\FVar_\Sigma\cong
\FLang_\Sigma$ sends a local pseudovariety $P\in\FVar_\Sigma$ to the directed
union of all finite
local varieties of languages over $\Sigma$ in $\C$ corresponding to the finite
$\Sigma$-generated $\D$-monoids in $P$. From this and Lemma~\ref{lem:key-lemma}
we conclude that the diagram below commutes for all $\D$-monoid morphisms
$f\colon\Psi\Sigma^*\to \Psi\Delta^*$.
\[\xymatrix{
      \FVar_\Sigma \ar[d]_{f_\sharp} \ar[r]^{\cong} & \FLang_\Sigma
      \ar[d]^{f_*} \\
      \FVar_\Delta \ar[r]_{\cong} & \FLang_\Delta
    }
\]
Hence, by Theorem~\ref{thm:equivalence-between-fibrations-and-indexed-categories}, we get an isomorphism between the corresponding opfibrations:
\begin{theorem}
  \label{thm:fibrational-eilenberg-thm}
  The opfibrations $p\colon \FLang\to \Free(\CMon\D)$ and $q\colon \FVar\to\Free(\CMon\D)$ are isomorphic. 
\end{theorem}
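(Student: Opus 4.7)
The plan is to exploit the Grothendieck correspondence (Theorem~\ref{thm:equivalence-between-fibrations-and-indexed-categories}): since both $p$ and $q$ are (by construction) the opfibrations $p_{(-)_*}$ and $p_{(-)_\sharp}$ associated with the indexed posets of Definitions~\ref{def:varindexed} and~\ref{def:fvarindexed}, it suffices to exhibit a natural isomorphism $(-)_\sharp \cong (-)_*\colon \Free(\CMon\D)\to\CPos$. Fibrewise, this isomorphism is already supplied by the General Local Variety Theorem (Theorem~\ref{thm:genlocal}), which for each finite alphabet $\Sigma$ gives $\FVar_\Sigma\cong \FLang_\Sigma$; the posets are reverse-inclusion ordered and both isomorphisms preserve this order, as they arise from a correspondence between compact elements (finite local varieties and finite $\Sigma$-generated $\D$-monoids) followed by ideal completion (see Remark~\ref{rem:locvariso}).

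The substantive task is therefore to verify naturality in $\Sigma$, i.e.\ that for every $\D$-monoid morphism $f\colon\Psi\Sigma^*\to\Psi\Delta^*$ the square
\[
\xymatrix{
  \FVar_\Sigma \ar[d]_{f_\sharp} \ar[r]^{\cong} & \FLang_\Sigma \ar[d]^{f_*} \\
  \FVar_\Delta \ar[r]_{\cong} & \FLang_\Delta
}
\]
commutes. I would unfold both paths starting from a local pseudovariety $P\in\FVar_\Sigma$. Going right then down yields the local variety $f_*(V)$ where $V$ is the directed union of finite local varieties corresponding to the $\D$-monoids in $P$; unfolding the definition of $f_*$, this is the directed union of all finite local varieties $W\subseteq\Reg(\Delta)$ admitting a commuting square as in Diagram~\ref{eq:closed-under-preimages}. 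Going down then right yields the directed union of finite local varieties corresponding to elements of $f_\sharp(P)$, i.e.\ to finite $\Delta$-generated $\D$-monoids $N$ fitting into Diagram~\ref{eq:closed-under-preimage2} for some $M\in P$. Lemma~\ref{lem:key-lemma} translates between these two characterisations at the level of finite objects: a square~\ref{eq:closed-under-preimages} exists iff the dual square~\ref{eq:closed-under-preimage2} does, with $M,N$ the $\D$-monoids corresponding to $V,W$. Both directed unions therefore range over the same collection of finite local varieties, so they coincide.

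Once naturality is in hand, Theorem~\ref{thm:equivalence-between-fibrations-and-indexed-categories} delivers an index-preserving isomorphism $i\colon \FLang\xrightarrow{\cong}\FVar$ with $qi=p$. Concretely, $i$ sends $(\Sigma,V)\mapsto (\Sigma, P_V)$ where $P_V$ is the local pseudovariety of finite $\Sigma$-generated $\D$-monoids corresponding to finite local subvarieties of $V$, and is the identity on base morphisms; morphisms are sent to morphisms thanks to the pointwise agreement established above.

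I expect the main obstacle to be purely bookkeeping in the naturality step: one must be careful that the closure under directed unions on the $\FLang$ side and the closure under quotients/subdirect products on the $\FVar$ side line up correctly under Lemma~\ref{lem:key-lemma}, which only addresses the finite case. The key observation making this work is that in $\FLang_\Sigma$ the finite local varieties are exactly the compact elements and every local variety is their directed join, so the natural transformation is uniquely determined by its restriction to compact elements, where Lemma~\ref{lem:key-lemma} applies directly.
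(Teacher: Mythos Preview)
Your proposal is correct and matches the paper's own argument essentially step for step: establish the fibrewise isomorphism $\FVar_\Sigma\cong\FLang_\Sigma$ via Theorem~\ref{thm:genlocal}, verify naturality of the square using Lemma~\ref{lem:key-lemma} (with the directed-union/compact-element bookkeeping you describe, which the paper leaves implicit), and then invoke Theorem~\ref{thm:equivalence-between-fibrations-and-indexed-categories} to pass from naturally isomorphic indexed posets to isomorphic opfibrations. The only difference is the level of detail you supply in the naturality check.
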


\begin{definition}
By a \textbf{variety of languages in $\boldsymbol{\C}$} is meant a global
section of $p$, i.e.\ a functor $\mathcal{V}\colon \Free(\CMon\D)\to \FLang$ with $p\mathcal{V}=\id$.
\end{definition}

In more concrete terms, a variety of languages in $\C$ is given by a
collection of local varieties $V_\Sigma\in \FLang_\Sigma$ (where $\Sigma$ ranges over all finite alphabets) such that for every  $f\colon \Psi\Sigma^*\to\Psi\Delta^*$ the local variety $V_\Sigma$ is closed under $f$-preimages of languages in
$V_\Delta$. Varieties of languages in the categories $\C=\CBA$, $\CDLat$, $\CSLat$ and $\mathbb{Z}_2\hyph\CVec$ of Example~\ref{ex:cd} are precisely the classical varieties of languages of Eilenberg~\cite{Eilenberg1976}, the positive varieties of Pin~\cite{Pin1995}, the disjunctive varieties of Pol\'ak~\cite{Polak2001} and the xor varieties of Reutenauer~\cite{Reutenauer1980}, respectively.

By Theorem~\ref{thm:fibrational-eilenberg-thm} every global section of $p\colon \FLang\to \Free(\CMon\D)$  corresponds uniquely to a
global section of $q\colon \FVar \to \Free(\CMon\D)$. In the next section we will see that also the global sections of $q$ admit a concrete interpretation.

\section{Profinite \texorpdfstring{$\boldsymbol{\D}$}{D}-Monoids}
\label{sec:profinite-monoids}
A \textbf{profinite $\boldsymbol{\D}$-monoid} is a cofiltered limit of finite
$\D$-monoids, and the \textbf{profinite completion}~$\widehat{M}$ of a
$\D$-monoid~$M$ is the cofiltered limit of the diagram of all its finite quotients. Since
limits in $\CMon\D$ are formed on the level of $\CSet$,
every profinite $\D$-monoid is equipped with a profinite topology,
i.e.\ it can be viewed as a Stone space if $\D$ is a variety of algebras
(or an ordered Stone space, if $\D$ is a variety of ordered algebras).%
\footnote{An \textbf{(ordered) Stone space} is a compact space such that for
  every $x \neq y$ (resp.\ $x \not\leq y$) there exists a clopen (upper) set
  containing $x$ but not $y$.}
By $\CPFMon\D$ denote the category of profinite $\D$-monoids with 
continuous (order-preserving) $\D$-monoid morphisms.
\begin{theorem}\label{thm:promon}
  \begin{enumerate}
    \item $\CPFMon\D$ is the pro-completion of the category $\CMon_f\D$ of
      finite $\D$-monoids (cf.\ Remark~\ref{rem:indpro}).
    \item The profinite completion $M\mapsto \widehat{M}$ gives a left adjoint to the forgetful functor
      $\CPFMon\D\to \CMon\D$.
  \end{enumerate}
\end{theorem}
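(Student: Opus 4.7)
The plan is to address part (1) by directly verifying the universal property of the pro-completion, and then deduce part (2) either by a short direct argument or as a formal consequence of (1) combined with the ind/pro adjunction of Remark~\ref{rem:indpro}. The common technical heart is a factorization lemma: every continuous $\D$-monoid morphism from a profinite $\D$-monoid into a finite $\D$-monoid factors essentially uniquely through one of the finite quotients defining it.

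For part (1), I first check that $\CMon_f\D$ is essentially small, finitely complete, and finitely cocomplete, so that $\Pro(\CMon_f\D)$ is well-defined in the sense of Remark~\ref{rem:indpro}. Finite products and equalizers of finite $\D$-monoids are finite because subalgebras of finite $\D$-algebras are finite, and finite coproducts and coequalizers of finite $\D$-monoids are finite by local finiteness of $\D$ (Assumption~\ref{asm:basic}.1). The canonical inclusion $\CMon_f\D\hookrightarrow \CPFMon\D$ extends uniquely along $\CMon_f\D\hookrightarrow \Pro(\CMon_f\D)$ to a cofiltered-limit-preserving functor $E\colon \Pro(\CMon_f\D)\to \CPFMon\D$, sending a formal diagram to its genuine cofiltered limit in $\CMon\D$ with the induced profinite topology. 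Essential surjectivity of $E$ is just the definition of a profinite $\D$-monoid. For full faithfulness, it suffices to establish the natural bijection
\[
  \CPFMon\D(\lim_i M_i,\, N) \;\cong\; \mathop{\mathrm{colim}}_i \CMon_f\D(M_i, N)
\]
for every $N\in \CMon_f\D$ and every cofiltered diagram $(M_i)$ in $\CMon_f\D$, which is precisely the factorization lemma.

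For part (2), the universal property to verify is that every $\D$-monoid morphism $f\colon M\to N$ with $N$ profinite extends uniquely to a continuous $\D$-monoid morphism $\hat f\colon \widehat M\to N$ along the canonical map $\eta_M\colon M\to \widehat M$. Writing $N=\lim_j N_j$ with each $N_j$ finite, it suffices to produce compatible continuous $\hat f_j\colon \widehat M\to N_j$. Each composite $M\xrightarrow{f}N\to N_j$ lands in a finite $\D$-monoid and hence factors through a finite quotient $e_j\colon M\epito Q_j$, which by construction is one of the objects in the cofiltered diagram defining $\widehat M$; the projection $\widehat M\to Q_j\to N_j$ supplies $\hat f_j$, and uniqueness is forced by density of the image of $\eta_M$. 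A cleaner alternative, once part (1) is in hand, is to identify $\CMon\D\simeq \Ind(\CMon_f\D)$ via local finiteness and invoke the $F\dashv U$ adjunction of Remark~\ref{rem:indpro}.

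The main obstacle is the factorization lemma underlying both parts. For $\D=\CSet$ this is classical: continuous maps from a profinite space to a finite discrete space factor through some projection, via the clopen partition induced by the fibres over $N$. Its generalisation requires that cofiltered limits of finite $\D$-monoids, computed in $\CMon\D$ and topologised by the discrete topologies on the components, carry a compact (ordered) Stone topology compatible with the $\D$-monoid operations. In the ordered case one must additionally ensure the factorising morphism is order-preserving; this is automatic because the finite quotient inherits the quotient order, and the morphism into $N_j$ is already order-preserving.
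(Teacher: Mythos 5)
Your main line of argument is sound and is precisely the standard argument that the paper itself does not spell out but delegates to the literature (Johnstone, Prop.\ VI.2.4 for part (1); Rhodes--Steinberg, Thm.\ 3.2.7 for part (2)): extend the inclusion $\CMon_f\D\hookrightarrow\CPFMon\D$ to a cofiltered-limit-preserving functor on $\Pro(\CMon_f\D)$, get essential surjectivity by definition of a profinite $\D$-monoid, and reduce both full faithfulness and the universal property of $\widehat{M}$ to the factorisation lemma $\CPFMon\D(\Lim_i M_i,N)\cong\Colim_i\CMon_f\D(M_i,N)$ for finite $N$. You also correctly isolate the compactness argument and the order-preservation issue in the ordered case as the technical core.

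Two of your auxiliary claims are false, though. First, $\CMon_f\D$ is \emph{not} finitely cocomplete: local finiteness of $\D$ (Assumption~\ref{asm:basic}.1) says free $\D$-\emph{algebras} on finite sets are finite, but free $\D$-\emph{monoids} are not ($\Psi\Sigma^*$ contains $\Sigma^*$), and coproducts of finite $\D$-monoids are free products, which are infinite in general (already for two copies of a two-element monoid in $\CMon\CSet$). Coequalizers are fine (quotients of finite algebras are finite), and finite limits are fine, but coproducts fail. This does no real damage, since the pro-completion exists for an arbitrary small category and finite \emph{co}completeness of the base is nowhere needed for part (1); just drop the claim. Second, and more seriously, your ``cleaner alternative'' for part (2) does not work: $\CMon\D$ is \emph{not} $\Ind(\CMon_f\D)$, for the same reason --- $\CMon\D$ is not locally finite, and $\Ind(\CMon_f\D)$ is only the full subcategory of $\D$-monoids that are filtered colimits of finite ones. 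In particular $\Psi\Sigma^*$ is not such a colimit, so the $F\dashv U$ adjunction of Remark~\ref{rem:indpro} would produce a profinite-completion adjoint only on that subcategory and would miss exactly the object $\widehat{\Psi\Sigma^*}$ on which the whole paper rests. Your direct verification of the universal property of $\widehat{M}$ (factor each $M\to N\to N_j$ through a finite quotient of $M$, use density for uniqueness) is the correct route; keep that and discard the alternative.
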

The first item follows from~\cite[Proposition VI.2.4]{Johnstone1982}.  The
argument given there for varieties of algebras also applies to ordered algebras.
The second item follows from a standard argument for ordinary monoids, see
e.g.,~\cite[Theorem~3.2.7]{Rhodes2009}.

\begin{example}For our predual categories $\C/\D$ of Example~\ref{ex:cd} we obtain the following  descriptions of the categories $\Pro\D_f$, $\CMon\D$ and $\CPFMon\D$, cf.~\cite[Corollary VI.2.4]{Johnstone1982}.

\begin{table}[h!]
  \begin{tabular}{ l | l | l | l | l }
    $\C$ & $\D$ & 
    $\Pro\D_f$ & $\CMon\D$ & $\CPFMon\D$ \\
    \hline
    $\CBA$ & $\CSet$ & $\mathbf{Stone}$ & $\CMon$ & $\mathbf{Stone}(\CMon)$ \\
    $\CDLat$ & $\CPos$ & $\mathbf{OStone}$  & $\mathbf{OMon}$ & (to be
    characterised)
    \\
    $\mathbf{SLat}$ & $\mathbf{SLat}$ &  $\mathbf{Stone}(\mathbf{SLat})$ &
    $\mathbf{ISRing}$& $\mathbf{Stone}(\mathbf{ISRing})$ \\
    $\mathbb{Z}_2\hyph\mathbf{Vec}$ & $\mathbb{Z}_2\hyph\mathbf{Vec}$ 
    & $\mathbf{Stone}(\mathbb{Z}_2\hyph\mathbf{Vec})$ &
    $\mathbb{Z}_2\hyph\mathbf{Alg}$ &
    $\mathbf{Stone}(\mathbb{Z}_2\hyph\mathbf{Alg})$ 
  \end{tabular}
  \label{tab:categories}
\end{table}

$\mathbf{Stone}$ and $\mathbf{OStone}$ are the categories of (ordered) Stone spaces and continuous (order-preserving) maps.
The categories in the fourth column are the categories of monoids, ordered monoids, idempotent semirings and $\mathbb{Z}_2$-algebras, respectively; see Example~\ref{ex:monoids}.
By $\mathbf{Stone}(\A)$ for
a variety of algebras $\A$ we mean the category of $\A$-algebras
in $\mathbf{Stone}$.  For example, $\mathbf{Stone}(\CMon)$ 
is the category of monoids equipped with a Stone
topology (making the monoid multiplication continuous) and continuous monoid morphisms.
\end{example}

\subsection{Local pseudovarieties of
  \texorpdfstring{$\boldsymbol{\D}$}{D}-monoids vs. profinite
  \texorpdfstring{$\boldsymbol{\D}$}{D}-monoids}
In this section we show how to identify local pseudovarieties of $\D$-monoids
over $\Sigma$ with $\Sigma$-generated profinite $\D$-monoids. In the following
\textbf{quotients} of profinite $\D$-monoids are meant to be represented by
surjective continuous $\D$-monoid morphisms.  A
\textbf{$\boldsymbol{\Sigma}$-generated profinite $\boldsymbol{\D}$-monoid} is a
quotient of $\widehat{\Psi\Sigma^*}$, the profinite completion of the free
$\D$-monoid $\Psi\Sigma^*$. Note that, by Theorem~\ref{thm:promon},
$\widehat{\Psi\Sigma^*}$ is the free profinite $\D$-monoid on the free
$\D$-monoid $\Psi\Sigma^*$ w.r.t.\ the forgetful functor $\Pro\CMon_f\D\to
\CMon\D$, and hence also the free profinite $\D$-monoid on the set $\Sigma$
w.r.t.\ the composite forgetful functor $\Pro\CMon_f\D\to \CMon\D\to \CSet$.
The following standard facts will be useful.
\begin{lemma}[{see e.g.,~\cite[Chapter~3]{Rhodes2009}}]
  \label{lem:facts-inverse-systems}
  Let $F\colon\mathscr{J}\to\mathbf{KHaus}$ be a cofiltered diagram in the
  category of compact Hausdorff spaces and continuous functions. 
  \begin{enumerate}
    \item If every $F_i \xrightarrow{Ff} F_j$ for $i\xrightarrow{f}j$ is
      surjective, then the limit projections $\Lim F \xrightarrow{\pi_i} F_i$
      are also surjective. 
    \item If $\varphi\colon \Delta X \Rightarrow F$ is a cone over $F$ such that
      every projection $\varphi_i\colon X\to F_i$ is surjective, then the mediating morphism $X \to \Lim
      F$ is also surjective.
  \end{enumerate}
\end{lemma}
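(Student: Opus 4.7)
My plan is to prove both items by reducing to the finite intersection property in a compact Hausdorff space, using cofilteredness of $\mathscr{J}$ to supply cones below finite subdiagrams. I will use that cofiltered limits in $\mathbf{KHaus}$ are computed as the closed subspace $\Lim F \subseteq \prod_{i\in\mathscr{J}} F_i$ of \emph{compatible tuples} $(y_i)$ satisfying $F(f)(y_i)=y_j$ for every morphism $f\colon i\to j$ in $\mathscr{J}$; this subspace is compact Hausdorff by Tychonoff, and $\pi_i$ is the restriction of the product projection.

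For part~(1), I will fix $x_{i_0}\in F_{i_0}$ and, for every finite full subcategory $\mathscr{J}_0\subseteq \mathscr{J}$ containing $i_0$, consider
\[
A_{\mathscr{J}_0} \defeq \{\,(y_i)\in \prod_{i} F_i \mid y_{i_0}=x_{i_0} \text{ and } F(f)(y_i)=y_j \text{ for every } f\colon i\to j \text{ in } \mathscr{J}_0\,\}.
\]
Each $A_{\mathscr{J}_0}$ is closed in the compact space $\prod_i F_i$. To show it is non-empty, I will use cofilteredness to obtain a cone on $\mathscr{J}_0$ with apex $m$ and legs $f_k\colon m\to k$, lift $x_{i_0}$ along the surjection $F(f_{i_0})\colon F_m\epito F_{i_0}$ to some $y_m\in F_m$, and define $y_k\defeq F(f_k)(y_m)$ on $\mathscr{J}_0$ (filling the remaining coordinates arbitrarily). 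Since $A_{\mathscr{J}_0\cup\mathscr{J}_1}\subseteq A_{\mathscr{J}_0}\cap A_{\mathscr{J}_1}$, the family $\{A_{\mathscr{J}_0}\}$ has the finite intersection property, and so by compactness its total intersection is non-empty; any element lies in $\Lim F$ and projects to $x_{i_0}$ under $\pi_{i_0}$.

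For part~(2), I will write $\bar\varphi\colon X\to \Lim F$, $x\mapsto(\varphi_i(x))_i$, for the mediating morphism and, given $(y_i)\in\Lim F$, produce a preimage. The sets $K_i\defeq \varphi_i^{-1}(y_i)\subseteq X$ are closed and, by surjectivity of each $\varphi_i$, non-empty. For a finite $\mathscr{J}_0\subseteq\mathscr{J}$ I will pick a cone $f_k\colon m\to k$ ($k\in\mathscr{J}_0$); compatibility yields $F(f_k)(y_m)=y_k$, and because $\varphi$ is a cone we have $\varphi_k = F(f_k)\circ\varphi_m$, whence $K_m\subseteq \bigcap_{k\in\mathscr{J}_0}K_k$. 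Thus $\{K_i\}$ has the finite intersection property, so $\bigcap_i K_i\neq\emptyset$ by compactness of $X$, and any point in it maps under $\bar\varphi$ to $(y_i)$.

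The main subtlety I foresee lies in part~(1): producing a \emph{coherent} assignment on all of $\mathscr{J}_0$ simultaneously, rather than edge by edge, forces the apex $m$ to sit below every object of $\mathscr{J}_0$ with legs making the relevant triangles commute --- precisely what cofilteredness provides when applied to the finite diagram $\mathscr{J}_0$. Granting this, only the surjectivity hypothesis and Tychonoff compactness are needed to close the argument in both parts.
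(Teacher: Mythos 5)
The paper does not prove this lemma at all --- it is quoted as a standard fact with a pointer to Rhodes--Steinberg --- and your argument is exactly the classical finite-intersection-property proof that reference would give, so there is nothing to compare at the level of strategy: both parts are correct. The only point to tighten is in part~(1): a \emph{finite full} subcategory $\mathscr{J}_0\subseteq\mathscr{J}$ may contain infinitely many morphisms (e.g.\ infinite endomorphism monoids), and cofilteredness only guarantees cones over diagrams with finitely many objects \emph{and} morphisms; so you should instead index your closed sets $A_S$ by finite sets $S$ of morphisms of $\mathscr{J}$ (together with $i_0$), which changes nothing else in the argument since the family is still directed and every compatibility condition eventually appears. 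With that adjustment, the lifting of $x_{i_0}$ along the surjective leg $F(f_{i_0})$, the coherence of $y_k\defeq F(f_k)(y_m)$ via the cone identities $g\circ f_k=f_{k'}$, and the compactness step are all sound; part~(2) is correct as written and, as your proof makes visible, does not even use surjectivity of the connecting maps.
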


\begin{rem}\label{rem:pvartosmon}
\begin{enumerate}
\item To each local pseudovariety $P \in \FVar_\Sigma$ we associate a
  $\Sigma$-generated profinite $\D$-monoid as follows. Note first that $P$
  defines a cofiltered diagram in $\Pro\CMon\D$ via the projection $(e\colon
  \Psi\Sigma^* \epito M) \mapsto M$. Since the connecting morphisms are
  surjective, the above lemma implies that every limit projection $\Lim P \to M$
  for $M \in P$ is surjective. Moreover, given $P \subseteq P'$ in
  $\FVar_\Sigma$, there is a surjective mediating morphism $h\colon \Lim P' \to
  \Lim P$. In particular, taking $P'$ to be the local pseudovariety of
  \emph{all} finite quotients of $\Psi\Sigma^*$ with $\Lim
  P'=\widehat{\Psi\Sigma^*}$ we get a surjective morphism
  $\widehat{\Psi\Sigma^*}\epito \Lim P$, i.e.\ a $\Sigma$-generated profinite
  $\D$-monoid.

\item Conversely, to each $\Sigma$-generated profinite $\D$-monoid
  $e_\Sigma\colon \Psi\Sigma^*\epito F\Sigma$ we associate a local pseudovariety
  $\mathcal{V}_{F\Sigma}\in \FVar_\Sigma$ as follows: $\mathcal{V}_{F\Sigma}$
  consists of all finite $\Sigma$-generated $\D$-monoids of the form
  $\xymatrix{\Psi\Sigma^* \ar[r]^\eta & \widehat{\Psi\Sigma^*}
    \ar@{->>}[r]^{e_\Sigma} & F\Sigma \ar@{->>}[r]^{e_M} & M}$, where $\eta$ is
  the universal arrow of the adjunction between $\Pro\CMon_f \D$ and $\CMon\D$
  (see Theorem~\ref{thm:promon}) and $M$ is any finite quotient of $F\Sigma$.
  Observe that such a composite $e_Me_\Sigma\eta$ is always surjective: since
  $\widehat{\Psi\Sigma^*}$ is the limit of all finite quotients of
  $\Psi\Sigma^*$, and $M$ is finite (hence a finitely copresentable object of
  $\Pro\CMon\D$), the morphism $e_Me_\Sigma$ factorises through some limit
  projection $\pi_N$, where $N$ is a finite quotient of $\Psi\Sigma^*$:
   \[
     \xymatrix{
       \Psi\Sigma^* \ar[r]^\eta \ar@{->>}[rd] & \widehat{\Psi\Sigma^*} \ar@{->>}[d]_{\pi_N}
       \ar@{->>}[r]^{e_\Sigma} & F\Sigma\ar@{->>}[d]^{e_M} \\ & N \ar@{->>}[r]_f & M
     }
   \]
\end{enumerate}
\end{rem}

It is not difficult to to see that the two constructions of
Remark~\ref{rem:pvartosmon} are mutually inverse. More precisely:

\begin{theorem}
  \label{thm:identify-pseudovariety}
  Let $\Sigma$ be a finite set.
  \begin{enumerate}
    \item Every $\Sigma$-generated profinite $\D$-monoid~$F\Sigma$ corresponds
      uniquely to a local pseudovariety $\mathscr{V}_{F\Sigma}$ of
            $\D$-monoids over~$\Sigma$. That is,
      \[
        \mathsf{Quo}(\widehat{\Psi\Sigma^*})
        \cong \FVar_\Sigma,
      \]
      where $\mathsf{Quo}(\widehat{\Psi\Sigma^*})$ denotes the poset of $\Sigma$-generated profinite $\D$-monoids.
    \item Let $f\colon\Psi\Sigma^*\to\Psi\Delta^*$ be a $\D$-monoid morphism, $F\Sigma$ a $\Sigma$-generated profinite $\D$-monoid and $F\Delta$ a $\Delta$-generated profinite $\D$-monoid. Then the right-hand diagram below commutes for some $h$ iff 
    for every $N \in
      \mathscr{V}_{F\Delta}$ there is some $M\in\mathscr{V}_{F\Sigma}$ and a
      morphism $h_N$ making the left-hand diagram commute: 
      \[
        \vcenter{
          \xymatrix{
            \Psi\Sigma^* \ar[r]^{f} \ar@{->>}[d] & \Psi\Delta^*\ar@{->>}[d] \\
            M \ar[r]_{h_N} & N
          }
        }
        \qquad\quad
        \vcenter{
          \xymatrix{
            \widehat{\Psi\Sigma^*}\ar[r]^{\widehat{f}}\ar@{->>}[d]&
            \widehat{\Psi\Delta^*}\ar@{->>}[d]\\
            F\Sigma\ar[r]_{h} & F\Delta
          }
        }
      \]
  \end{enumerate}
\end{theorem}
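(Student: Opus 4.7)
My plan for part~(1) is to verify that the two assignments of Remark~\ref{rem:pvartosmon} are mutually inverse and monotone, thereby establishing the poset isomorphism $\mathsf{Quo}(\widehat{\Psi\Sigma^*})\cong\FVar_\Sigma$. Starting from $P\in\FVar_\Sigma$, the cofiltered limit $\Lim P$ in $\Pro\CMon_f\D$ admits a canonical $\Sigma$-generating surjection from $\widehat{\Psi\Sigma^*}$ by Lemma~\ref{lem:facts-inverse-systems}. Applying the second construction yields the pseudovariety of all finite $\Sigma$-generated quotients of $\Lim P$; this coincides with $P$, since any such finite quotient is finitely copresentable in $\Pro\CMon\D$ and therefore factors through a limit projection $\Lim P\twoheadrightarrow M$ with $M\in P$, so it belongs to $P$ by closure under quotients. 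Conversely, starting from $e_\Sigma\colon\widehat{\Psi\Sigma^*}\epito F\Sigma$, Theorem~\ref{thm:promon}.1 identifies $F\Sigma$ with the cofiltered limit of its finite quotients, which is precisely $\Lim\mathscr{V}_{F\Sigma}$. Monotonicity with respect to the quotient order on $\mathsf{Quo}(\widehat{\Psi\Sigma^*})$ and reverse inclusion on $\FVar_\Sigma$ then follows from these universal descriptions.

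For part~(2), the forward direction assumes $h\circ e_\Sigma = e_\Delta\circ\widehat{f}$ and fixes $N\in\mathscr{V}_{F\Delta}$ with presentation $e_N = k_N\circ e_\Delta\circ\eta_\Delta$ for some finite quotient $k_N\colon F\Delta\twoheadrightarrow N$. Naturality of the unit $\eta$ of the profinite completion adjunction gives $\eta_\Delta\circ f = \widehat{f}\circ\eta_\Sigma$, whence
\[
  e_N\circ f \;=\; k_N\circ e_\Delta\circ\widehat{f}\circ\eta_\Sigma \;=\; k_N\circ h\circ e_\Sigma\circ\eta_\Sigma.
\]
Because $N$ is finite and $F\Sigma$ is the cofiltered limit of its finite quotients, $k_N\circ h$ factors as $h_N\circ e_M$ for some quotient $e_M\colon F\Sigma\twoheadrightarrow M$ with $M\in\mathscr{V}_{F\Sigma}$; precomposing with $e_\Sigma\circ\eta_\Sigma$ then yields the required left-hand square.

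For the converse direction, I will exploit $F\Delta\cong\Lim\mathscr{V}_{F\Delta}$ from part~(1). The given family $(M_N, h_N)_{N\in\mathscr{V}_{F\Delta}}$ assembles into a cone $(h_N\circ e_{M_N}\colon F\Sigma\to N)_{N}$ over this limit diagram, compatibility with connecting morphisms $N\twoheadrightarrow N'$ being ensured by the fill-in property of the (epi, strong mono) factorisation system on $\D$. The induced mediating continuous $\D$-monoid morphism $h\colon F\Sigma\to F\Delta$ then satisfies the right-hand square: the identity $h\circ e_\Sigma = e_\Delta\circ\widehat{f}$ can be verified after postcomposing with each $k_N$, and these jointly separate points since $F\Delta$ is their limit. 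The main obstacle I anticipate is the careful bookkeeping of universal arrows---limit projections, the adjunction units $\eta$, and factorisations through finite quotients---together with systematic appeals to Lemma~\ref{lem:facts-inverse-systems} to ensure that every constructed morphism is surjective and hence legitimately represents an element of a pseudovariety or a $\Sigma$-generated profinite quotient; once this is under control, everything reduces to standard universal-property manipulations.
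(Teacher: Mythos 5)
Your part~(1) is essentially the paper's own argument: both directions are the mutual-inverse check of Remark~\ref{rem:pvartosmon}, with finite copresentability used to factor finite quotients of $\Lim P$ through limit projections (hence back into $P$ by closure under quotients), surjectivity of the mediating maps supplied by Lemma~\ref{lem:facts-inverse-systems}, and injectivity resting on the fact that a profinite $\D$-monoid is the cofiltered limit of its finite quotients. Note that the paper isolates this last fact as a separate lemma derived from Theorem~\ref{thm:promon}.1 by image-factorising the canonical pro-diagram; you cite the theorem directly, and that refinement is precisely part of the ``bookkeeping'' you defer. For part~(2) the paper gives no argument beyond ``a straightforward use of universal properties'', so your detailed treatment is a genuine addition, and it is correct in outline: the forward direction via naturality of $\eta$ and factorisation of $k_N\circ h$ through a finite quotient of $F\Sigma$ is exactly right. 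One imprecision in the converse: the compatibility of the cone $(h_N\circ e_{M_N})_N$ over the diagram $\mathscr{V}_{F\Delta}$, and likewise the verification of $h\circ e_\Sigma=e_\Delta\circ\widehat{f}$ after postcomposition with each $k_N$, do not follow from the fill-in property of the factorisation system alone. What you actually need is that two continuous morphisms out of $F\Sigma$ (respectively $\widehat{\Psi\Sigma^*}$) into a finite Hausdorff $\D$-monoid coincide as soon as they agree on the dense image of $\Psi\Sigma^*$ --- equivalently, that $e_\Sigma\circ\eta_\Sigma$ (respectively $\eta_\Sigma$) is an epimorphism of profinite $\D$-monoids; alternatively, normalise each $M_N$ to the image of $\Psi\Sigma^*\to N$, after which the fill-in property does give the connecting maps. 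With that substitution the argument goes through.
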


From the opfibration  $q\colon \FVar\to\Free(\CMon\D)$ we thus get the following isomorphic opfibration:

\begin{definition}
  The category $\FPFMon$ has
\begin{quote}
  \begin{description}
    \item[objects] $(\Sigma, F\Sigma)$ where $F\Sigma$ is a $\Sigma$-generated profinite
      $\D$-monoid;
    \item[morphisms] $(\Sigma, F\Sigma) \xrightarrow{f} (\Delta, F\Delta)$
      where $f\colon \Psi\Sigma^*\to\Psi\Delta^*$ is a $\D$-monoid morphism making
      the following diagram commute for some $h$:
      \begin{equation}
        \vcenter{
        \label{eq:profinite-equational-theory}
        \xymatrix{
          \widehat{\Psi\Sigma^*}\ar[r]^{\widehat{f}}\ar@{->>}[d]&
          \widehat{\Psi\Delta^*}\ar@{->>}[d]\\
          F\Sigma\ar[r]_{h} & F\Delta
        }
      }
      \end{equation}
  \end{description}
  \end{quote}
   The projection $\FPFMon\xrightarrow{q'}\Free(\CMon\D)$ sending $(\Sigma,
   F\Sigma)$ to $\Sigma$ and $(\Sigma, F\Sigma)\xrightarrow{f}(\Delta, F\Delta)$
   to~$f$ is called the  \textbf{opfibration of finitely generated profinite
     $\boldsymbol{\D}$-monoids}.
\end{definition}
For the record:

\begin{corollary}\label{cor:isoqq}
The opfibrations $q\colon \FVar\to\Free(\CMon\D)$ and $q'\colon \FPFMon\to \Free(\CMon\D)$ are isomorphic.
\end{corollary}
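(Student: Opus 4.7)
The plan is to reduce the corollary to the Grothendieck correspondence (Theorem~\ref{thm:equivalence-between-fibrations-and-indexed-categories}) by exhibiting an indexed poset for $\FPFMon$ that is naturally isomorphic to $(\mathord{-})_\sharp$. Nearly all the substantive work has already been packaged into Theorem~\ref{thm:identify-pseudovariety}; the task is simply to assemble it into the opfibrational form.

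First I would define an indexed poset $(\mathord{-})_\flat\colon \Free(\CMon\D)\to\CPos$ sending $\Sigma\mapsto \mathsf{Quo}(\widehat{\Psi\Sigma^*})$ and $f\mapsto f_\flat$, with both the poset order and the transition maps $f_\flat$ transported along the fibrewise bijection $\mathsf{Quo}(\widehat{\Psi\Sigma^*})\cong \FVar_\Sigma$ of Theorem~\ref{thm:identify-pseudovariety}(1). By construction, $(\mathord{-})_\flat$ is well-defined, functorial, and naturally isomorphic to $(\mathord{-})_\sharp$.

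Next I would verify that the opfibration $q'\colon\FPFMon\to \Free(\CMon\D)$ is isomorphic to the Grothendieck completion $\int(\mathord{-})_\flat\to\Free(\CMon\D)$. On objects this is immediate from Theorem~\ref{thm:identify-pseudovariety}(1): pairs $(\Sigma, F\Sigma)$ with $F\Sigma\in\mathsf{Quo}(\widehat{\Psi\Sigma^*})$ match on both sides. On morphisms, a $\D$-monoid morphism $f\colon\Psi\Sigma^*\to\Psi\Delta^*$ yields a morphism $(\Sigma,F\Sigma)\to(\Delta,F\Delta)$ in $\int(\mathord{-})_\flat$ iff $f_\flat(F\Sigma)\leq F\Delta$, which by the transport definition of $f_\flat$ and the bijection $F\Sigma\leftrightarrow\mathcal{V}_{F\Sigma}$ unfolds to the requirement that for every $N\in\mathcal{V}_{F\Delta}$ there exist $M\in\mathcal{V}_{F\Sigma}$ and $h_N$ making Diagram~\ref{eq:closed-under-preimage2} commute. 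By Theorem~\ref{thm:identify-pseudovariety}(2), this is equivalent to the existence of $h$ making Diagram~\ref{eq:profinite-equational-theory} commute, which is precisely the morphism condition defining $\FPFMon$.

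Combining these two observations, $\FPFMon\cong\int(\mathord{-})_\flat$ and $(\mathord{-})_\flat\cong(\mathord{-})_\sharp$, so Theorem~\ref{thm:equivalence-between-fibrations-and-indexed-categories} delivers opfibration isomorphisms $\FPFMon\cong\int(\mathord{-})_\flat\cong\int(\mathord{-})_\sharp\cong\FVar$ over $\Free(\CMon\D)$. The only potentially delicate step is the morphism identification in the second paragraph above, but this is exactly the content of Theorem~\ref{thm:identify-pseudovariety}(2), so the corollary is essentially a formal consequence of that theorem together with the Grothendieck correspondence.
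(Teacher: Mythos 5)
Your proposal is correct and follows essentially the same route as the paper: the category $\FPFMon$ is defined precisely so that Theorem~\ref{thm:identify-pseudovariety}(1) identifies its fibres with those of $\FVar$ and part~(2) identifies the morphism conditions, after which the isomorphism of opfibrations is immediate (via the Grothendieck correspondence, as you spell out). Your write-up merely makes explicit the transport of the indexed-poset structure that the paper leaves implicit.
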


%

\subsection{Pseudovarieties of \texorpdfstring{$\boldsymbol{\D}$}{D}-monoids
  vs.\ profinite equational theories}

By a \textbf{pseudovariety of $\boldsymbol{\D}$-monoids} is meant a class of finite $\D$-monoids closed under submonoids, quotients and finite products. In this section we relate pseudovarieties of $\D$-monoids to profinite equational theories of $\D$-monoids. 

\begin{definition}
  A \textbf{profinite equational theory} of $\D$-monoids is a global section
  $\mathcal{T}\colon \Free(\CMon\D)\to \FPFMon$ of the opfibration $q'\colon \FPFMon \to \Free(\CMon\D)$.
\end{definition}

More explicitly, a profinite equational theory associates to each finite set
$\Sigma$ a $\Sigma$-generated profinite monoid $e_\Sigma\colon
\widehat{\Psi\Sigma^*}\epito F\Sigma$ such that, for all $f\colon \Psi\Sigma^*\to\Psi\Delta^*$, diagram \eqref{eq:profinite-equational-theory} commutes for some $h$.

\begin{rem}\label{rem:pvartotheo}
\begin{enumerate}
\item To each profinite equational theory $\mathcal{T}$ with $\mathcal{T}\Sigma
  = (\Sigma, F\Sigma)$ we associate a pseudovariety $\mathscr{V}$ of
  $\D$-monoids as follows: $\mathscr{V}$ consists of all finite $\D$-monoids $M$
  such that for all $\D$-monoid morphisms $f\colon \widehat{\Psi\Sigma^*}\to M$
  there exists a (necessarily unique) $\D$-monoid morphism $\overline f\colon F\Sigma\to M$ with $\overline{f}e_\Sigma = f$.
\[
  \xymatrix@+.5em{
    {\widehat{\Psi\Sigma^*}} \ar[rd]_{f}\ar@{->>}[r]^{e_\Sigma} & F\Sigma
    \ar@{-->}[d]^{\overline{f}} \\
    & M
  }
\]
\item Conversely, to each pseudovariety $\mathscr{V}$ of $\D$-monoids we
  associate a profinite equational theory $\mathcal{T}$ with $\mathcal{T}\Sigma
  = (\Sigma, F\Sigma)$ as follows: given $\Sigma$, form the local pseudovariety
  $P_\Sigma$ of all $\Sigma$-generated finite $\D$-monoids $e\colon \Psi\Sigma^*\epito M$ with $M\in \mathscr{V}$. Then $F\Sigma$ is the $\Sigma$-generated profinite $\D$-monoid defined by $P_\Sigma$, see Remark~\ref{rem:pvartosmon} and Theorem~\ref{thm:identify-pseudovariety}.
\end{enumerate}
\end{rem}

Again, these constructions are mutually inverse:

\begin{theorem}\label{thm:reiterman} 
The maps $\mathcal{T}\mapsto \mathscr{V}$ and $\mathscr{V}\mapsto \mathcal{T}$ define a bijective correspondence between profinite equational theories and pseudovarieties of $\D$-monoids.
\end{theorem}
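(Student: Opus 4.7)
The plan is first to verify that both assignments are well-defined, and then to check that they are mutually inverse, the key leverage being Theorem~\ref{thm:identify-pseudovariety} (which connects $P_\Sigma$ and $\mathcal{V}_{F\Sigma}$) together with density of $\eta\colon\Psi\Sigma^*\to\widehat{\Psi\Sigma^*}$.

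\textbf{Well-definedness.} To see that $\mathcal{T}\mapsto \mathscr{V}$ lands in pseudovarieties, I would verify closure under quotients, submonoids and finite products by diagram chasing. For a quotient $q\colon M\epito M'$ with $M\in\mathscr{V}$, one lifts any $f\colon\widehat{\Psi\Sigma^*}\to M'$ on the generators $\Sigma$ (by freeness and finiteness of $M$) to some $\tilde f\colon\widehat{\Psi\Sigma^*}\to M$ with $q\tilde f=f$; factoring $\tilde f$ through $e_\Sigma$ yields a factorization of $f$. For a submonoid $i\colon M'\monoto M\in\mathscr{V}$, the factor of $i\circ f$ through $e_\Sigma$ actually lands in $M'$ because $e_\Sigma$ is surjective and $i\circ f$ has image in $M'$. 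Products are handled componentwise. For $\mathscr{V}\mapsto \mathcal{T}$, Theorem~\ref{thm:identify-pseudovariety}(2) reduces commutativity of Diagram~\eqref{eq:profinite-equational-theory} to a local condition: for each $f\colon\Psi\Sigma^*\to\Psi\Delta^*$ and each $N\in P_\Delta = \mathcal{V}_{F\Delta}$, take $M$ to be the image of $e_Nf$; its canonical (epi, strong mono) factorization supplies both $e_M\colon\Psi\Sigma^*\epito M$ and $g\colon M\monoto N$, and closure of $\mathscr{V}$ under submonoids gives $M\in\mathscr{V}$, whence $M\in P_\Sigma$.

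\textbf{Round $\mathcal{T}\mapsto\mathscr{V}\mapsto\mathcal{T}'$.} My plan is to prove $\mathcal{V}_{F\Sigma}=P_\Sigma$ for every $\Sigma$, which together with Theorem~\ref{thm:identify-pseudovariety}(1) (and naturality in $\Sigma$) identifies $\mathcal{T}'$ with $\mathcal{T}$. The inclusion $P_\Sigma\subseteq\mathcal{V}_{F\Sigma}$ is immediate: for $M\in P_\Sigma$ with $e\colon\Psi\Sigma^*\epito M$, the continuous extension $\hat e\colon\widehat{\Psi\Sigma^*}\to M$ factors through $e_\Sigma$ as $\overline{\hat e}\,e_\Sigma$ because $M\in\mathscr{V}$, and $\overline{\hat e}$ is surjective because $\hat e$ is. The reverse inclusion is where I expect the main obstacle, since $\mathcal{V}_{F\Sigma}$ is a \emph{local} one-alphabet condition whereas $\mathscr{V}$ must behave correctly over every alphabet~$\Delta$. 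Given $M\in\mathcal{V}_{F\Sigma}$ with $e = e_Me_\Sigma\eta_\Sigma$ and any $f\colon\widehat{\Psi\Delta^*}\to M$, the plan is: lift $f\eta_\Delta$ along the surjection $e$ (using freeness of $\Psi\Delta^*$) to some $g\colon\Psi\Delta^*\to\Psi\Sigma^*$, conclude $f=\hat e\hat g=e_Me_\Sigma\hat g$ by density of $\eta_\Delta$ and uniqueness of continuous extensions into the Hausdorff space $M$, and then invoke the global-section property of $\mathcal{T}$ on $g$ to produce $h\colon F\Delta\to F\Sigma$ with $e_\Sigma\hat g=he_\Delta$. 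Then $f=(e_Mh)e_\Delta$ factors through $e_\Delta$, so $M\in\mathscr{V}$.

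\textbf{Round $\mathscr{V}\mapsto\mathcal{T}\mapsto\mathscr{V}'$.} For $\mathscr{V}\subseteq\mathscr{V}'$, given $M\in\mathscr{V}$ and $f\colon\widehat{\Psi\Sigma^*}\to M$, factor $f\eta_\Sigma$ as $\Psi\Sigma^*\epito N\monoto M$; then $N\in\mathscr{V}$ by submonoid closure, hence $N\in P_\Sigma$, and the limit projection $F\Sigma\to N$ composed with the inclusion $N\monoto M$ furnishes (again by density) the required factor of $f$ through $e_\Sigma$. Conversely, for $\mathscr{V}'\subseteq\mathscr{V}$, given $M\in\mathscr{V}'$ I would view $M$ as $|M|$-generated via some surjection $e\colon\Psi|M|^*\epito M$. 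By $M\in\mathscr{V}'$ the extension $\hat e$ factors as $\overline{\hat e}\,e_{|M|}$, with $\overline{\hat e}$ surjective since $\hat e$ is. Finite copresentability of $M$ in $\Pro\CMon_f\D$ then lets $\overline{\hat e}\colon F|M|\epito M$ factor through some limit projection $F|M|\to N$ with $N\in P_{|M|}$; thus $M$ is a quotient of $N\in\mathscr{V}$, and closure under quotients yields $M\in\mathscr{V}$.
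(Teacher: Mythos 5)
Your proposal is correct and follows essentially the same route as the paper: the two well-definedness arguments (quotients via projectivity, submonoids via the fill-in/image factorisation, products componentwise, and Theorem~\ref{thm:identify-pseudovariety}(2) for the other direction) coincide with the paper's lemmas, and your round-trip verifications use the same ingredients (density of $\eta$, projectivity of free objects, finite copresentability, and the observation that $M\in\mathscr{V}$ reduces to $M$ being a quotient of $F|M|$). The only difference is presentational: you spell out explicitly the mutual-inverse check that the paper dismisses as a ``straightforward verification'' after its lemma on $F|M|$, which is a legitimate and complete filling-in rather than a new method.
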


\begin{rem}
This theorem can be viewed as a categorical presentation of the well-known Reiterman-Banaschewski correspondence~\cite{Reiterman1982,Banaschewski1983}. The difference lies in the definition of a profinite theory: Reiterman and Banaschewski work with profinite equations (i.e.\ pairs of elements of free profinite monoids) while we work with quotients of free profinite monoids.  
\end{rem}

%

\section{Eilenberg-type Correspondences}
\label{sec:appliations}
Putting the results of our paper together we will now derive a number of Eilenberg-type theorems. Each of these theorems is an immediate consequence of the isomorphisms we established between our opfibrations $p$, $q$ and $q'$ (see the diagram in the Introduction) and the characterisation of their global sections. First, by Theorem~\ref{thm:identify-pseudovariety} we get another version of the
General Local Variety Theorem, i.e.\ Theorem~\ref{thm:local-eilenberg-thm}).
\begin{theorem}[General Local Variety Theorem II]
  There is a one-to-one correspondence between local varieties of languages
  over $\Sigma$ in~$\C$  and $\Sigma$-generated profinite $\D$-monoids:
  \[
    \FLang_\Sigma \cong \mathsf{Quo}(\widehat{\Psi\Sigma^*}).
  \]
\end{theorem}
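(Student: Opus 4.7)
The plan is to prove the isomorphism $\FLang_\Sigma \cong \mathsf{Quo}(\widehat{\Psi\Sigma^*})$ by composing two isomorphisms already at our disposal. On the one hand, the General Local Variety Theorem (Theorem~\ref{thm:local-eilenberg-thm}) provides $\FLang_\Sigma \cong \FVar_\Sigma$. On the other, Theorem~\ref{thm:identify-pseudovariety}.1 provides $\FVar_\Sigma \cong \mathsf{Quo}(\widehat{\Psi\Sigma^*})$. Splicing these together yields the claim, so the entire argument reduces to a one-line observation.

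To make the correspondence concrete, I would unfold the composite in both directions. Starting from a local variety $V\hookrightarrow \rho T_\Sigma$ in $\C$, one first forms the local pseudovariety $P_V\in \FVar_\Sigma$ consisting of those finite $\Sigma$-generated $\D$-monoids $e_{SW}\colon \Psi\Sigma^*\epito SW$ that correspond via Remark~\ref{rem:locvariso} to some finite local subvariety $W$ of $V$, and then takes the cofiltered limit $\widehat{\Psi\Sigma^*}\epito \Lim P_V$; this is a $\Sigma$-generated profinite $\D$-monoid by Remark~\ref{rem:pvartosmon}.1. In the reverse direction, a quotient $e_\Sigma\colon \widehat{\Psi\Sigma^*}\epito F\Sigma$ is first sent, via Remark~\ref{rem:pvartosmon}.2, to the local pseudovariety $\mathscr{V}_{F\Sigma}$ of finite $\Sigma$-generated $\D$-monoids factoring through $e_\Sigma$, and then to the directed union of the corresponding finite local varieties of languages inside $\rho T_\Sigma$.

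The only point to verify is that the two composed isomorphisms agree on the semilattice structure. Both ingredients are order-preserving with respect to reverse inclusion on $\FLang_\Sigma$ and $\FVar_\Sigma$ and the canonical quotient ordering on $\mathsf{Quo}(\widehat{\Psi\Sigma^*})$, so compatibility of the orders is inherited from the cited theorems without further work. I do not anticipate any substantial obstacle: the present statement is effectively a corollary of Theorems~\ref{thm:local-eilenberg-thm} and~\ref{thm:identify-pseudovariety}, and its value lies in making the resulting profinite description of local varieties of languages explicit.
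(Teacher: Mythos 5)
Your proposal is correct and follows essentially the same route as the paper, which obtains this statement as an immediate corollary by composing the isomorphism $\FLang_\Sigma\cong\FVar_\Sigma$ of Theorem~\ref{thm:local-eilenberg-thm} with the isomorphism $\FVar_\Sigma\cong\mathsf{Quo}(\widehat{\Psi\Sigma^*})$ of Theorem~\ref{thm:identify-pseudovariety}. Your explicit unfolding of the composite correspondence and the order-compatibility check are accurate but add nothing beyond what the cited results already guarantee.
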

Similarly, by Theorem~\ref{thm:fibrational-eilenberg-thm}, Corollary~\ref{cor:isoqq} and Theorem~\ref{thm:reiterman} we recover the main result of~\cite{Adamek2015}, where a completely different proof method was applied:

\begin{theorem}[General Variety Theorem]
  There is a one-to-one correspondence between varieties of languages in $\C$ and pseudovarieties of $\D$-monoids.
\end{theorem}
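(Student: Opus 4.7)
The plan is to assemble the claimed bijection by chaining the three structural results already established. Recall that, by definition, a variety of languages in $\C$ is a global section of the opfibration $p\colon\FLang\to\Free(\CMon\D)$, and a profinite equational theory of $\D$-monoids is a global section of $q'\colon\FPFMon\to\Free(\CMon\D)$. The key observation I would record first is that isomorphic opfibrations over a common base carry bijective sets of global sections: if $i\colon\E\xrightarrow{\cong}\E'$ satisfies $p'i=p$, then for any functor $s\colon\B\to\E$ one has $ps=\id_\B$ iff $p'(is)=\id_\B$, so $s\mapsto is$ and $s'\mapsto i^{-1}s'$ are mutually inverse assignments on global sections.

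Applying this observation to Theorem~\ref{thm:fibrational-eilenberg-thm} and Corollary~\ref{cor:isoqq} in sequence yields a bijection between global sections of $p$, of $q$, and of $q'$, that is, between varieties of languages in $\C$ and profinite equational theories of $\D$-monoids. Composing with the bijection between profinite equational theories and pseudovarieties of $\D$-monoids supplied by Theorem~\ref{thm:reiterman} produces the desired one-to-one correspondence.

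There is essentially no obstacle at this stage: the theorem is a direct corollary of Theorems~\ref{thm:fibrational-eilenberg-thm} and~\ref{thm:reiterman} together with Corollary~\ref{cor:isoqq}, and the only verification required is the routine transport of global sections along opfibration isomorphisms sketched above. The entire conceptual content of the General Variety Theorem has been absorbed into the preceding three results, namely the fibrational lift of the General Local Variety Theorem, the limit construction identifying local pseudovarieties with finitely generated profinite $\D$-monoids, and the Reiterman--Banaschewski correspondence.
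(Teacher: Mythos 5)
Your proposal is correct and follows exactly the paper's route: the theorem is derived by chaining Theorem~\ref{thm:fibrational-eilenberg-thm}, Corollary~\ref{cor:isoqq} and Theorem~\ref{thm:reiterman}, with the (routine, and in the paper implicit) observation that an isomorphism of opfibrations over a common base induces a bijection between their global sections. Nothing is missing.
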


An interesting generalisation of this theorem emerges by restricting
$\Free(\CMon\D)$ to a subcategory. Recall that the pullback in
$\mathbf{Cat}$ of an opfibration $p\colon \E\to\B$ along any functor $F\colon \B'\to \B$
is again an opfibration, see e.g.,~\cite[Lemma 1.5.1]{Jacobs1999}.

\begin{definition}
  For a subcategory $\mathsf{C} \hookrightarrow \Free(\CMon\D)$, a
  \textbf{$\boldsymbol{\mathsf{C}}$-variety of languages in $\boldsymbol{\C}$}
  is a global section of the opfibration $p_\mathsf{C} \colon \FLang_\mathsf{C}
  \to \mathsf{C}$ obtained as the pullback of the opfibration $p$ along the
  inclusion. Similarly, a \textbf{profinite equational
    $\boldsymbol{\mathsf{C}}$-theory of $\boldsymbol{\D}$-monoids} is a global
  section of the opfibration $q_\mathsf{C}'\colon
  \FPFMon_\mathsf{C}\to\mathsf{C}$ obtained as the pullback of
  $q'\colon\FPFMon\to\Free(\CMon\D)$ along the inclusion.
  \[
    \xymatrix@C+1em{
      \FLang_\mathsf{C}\; \ar@{^(->}[r] \ar[d]_{p_\mathsf{C}}\pullbackcorner &
      \FLang \ar[d]^{p} \\
      \mathsf{C}\; \ar@{^(->}[r] & \Free(\CMon\D)
    }\qquad 
        \xymatrix@C+1em{
          \FPFMon_\mathsf{C}\; \ar@{^(->}[r] \ar[d]_{q_\mathsf{C}'}\pullbackcorner &
          \FPFMon \ar[d]^{q'} \\
          \mathsf{C}\; \ar@{^(->}[r] & \Free(\CMon\D)
        }
  \]
\end{definition}

More explicitly, a profinite equational $\mathsf{C}$-theory associates to each
$\Sigma\in \mathsf{C}$ a $\Sigma$-generated profinite monoid $e_\Sigma\colon
\widehat{\Psi\Sigma^*}\epito F\Sigma$ such that, for all $f\colon
\Psi\Sigma^*\to\Psi\Delta^*$ in $\mathsf{C}$, diagram
\eqref{eq:profinite-equational-theory} commutes for some $h$.  Similarly, a
$\mathsf{C}$-variety of languages determines a family
$(V_\Sigma)_{\Sigma\in\mathsf{C}}$, where $V_\Sigma$ is a local variety of
languages over $\Sigma$ in $\C$ and, for each $f\colon\Psi\Sigma^*\to\Psi\Delta^*$ in
$\mathsf{C}$, the local variety $V_\Sigma$ is closed under $f$-preimages of
languages in $V_\Delta$. For the case where $\C=\CBA$, $\D=\CSet$ and the
subcategory $\mathsf{C}$ contains all objects of $\Free(\CMon)$, this definition
coincides with the concept of a $\mathsf{C}$-variety of languages introduced by
Straubing~\cite{Straubing2002}. He also proved a special case of
Theorem~\ref{thm:cvar} below. Observe that since the opfibrations $p$ and $q'$
are isomorphic, so are their pullbacks $p_\mathsf{C}$ and $q_\mathsf{C}'$.
Therefore:

\begin{theorem}[General Variety Theorem for $\mathsf{C}$-varieties of languages]\label{thm:cvar}
  There is a one-to-one correspondence between $\mathsf{C}$-varieties of
  languages in $\C$ and profinite equational $\mathsf{C}$-theories of
  $\D$-monoids.
\end{theorem}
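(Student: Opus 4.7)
The plan is to reduce the theorem to the isomorphism of opfibrations $p \cong q'$ established in the previous sections, and then to observe that this isomorphism descends to the pullbacks along $\mathsf{C} \hookrightarrow \Free(\CMon\D)$, from which the correspondence between global sections follows formally.

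First, I would combine Theorem~\ref{thm:fibrational-eilenberg-thm} and Corollary~\ref{cor:isoqq} to conclude that $p\colon \FLang \to \Free(\CMon\D)$ and $q'\colon \FPFMon \to \Free(\CMon\D)$ are isomorphic opfibrations; pick an index-preserving isomorphism $i\colon \FLang \xrightarrow{\cong} \FPFMon$ with $q'i = p$. Next, I would invoke the standard fact (cited from~\cite[Lemma 1.5.1]{Jacobs1999}) that opfibrations are stable under pullback in $\Cat$, and that any index-preserving isomorphism between opfibrations over $\Free(\CMon\D)$ induces an index-preserving isomorphism between their pullbacks along the inclusion $\mathsf{C} \hookrightarrow \Free(\CMon\D)$. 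Concretely, the universal property of the pullback yields a unique functor $i_\mathsf{C}\colon \FLang_\mathsf{C} \to \FPFMon_\mathsf{C}$ over $\mathsf{C}$ such that the obvious square commutes, and the analogous construction for $i^{-1}$ provides its inverse. Hence $p_\mathsf{C} \cong q'_\mathsf{C}$ as opfibrations over $\mathsf{C}$.

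Then I would conclude by the general observation that an index-preserving isomorphism of opfibrations induces a bijection between their sets of global sections: if $j\colon \E \xrightarrow{\cong} \E'$ satisfies $p'j = p$, then $s \mapsto j \circ s$ sends global sections of $p$ to global sections of $p'$, with inverse $s' \mapsto j^{-1} \circ s'$. Applying this to $i_\mathsf{C}$ yields a bijection between global sections of $p_\mathsf{C}$, i.e.\ $\mathsf{C}$-varieties of languages in $\C$, and global sections of $q'_\mathsf{C}$, i.e.\ profinite equational $\mathsf{C}$-theories of $\D$-monoids. This is exactly the claimed correspondence.

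The only potentially delicate point is the functorial construction of $i_\mathsf{C}$ out of $i$ and the verification that it is indeed an isomorphism of opfibrations over $\mathsf{C}$; but this is a purely formal consequence of the $2$-categorical universal property of pullbacks in $\Cat$, so there is no real obstacle. All the substantive content has already been carried out in Sections~\ref{sec:duality} and~\ref{sec:profinite-monoids}, and the present theorem is essentially a corollary obtained by pulling back along the inclusion of $\mathsf{C}$.
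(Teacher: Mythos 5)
Your proposal is correct and follows exactly the route the paper takes: the paper derives Theorem~\ref{thm:cvar} by combining Theorem~\ref{thm:fibrational-eilenberg-thm} and Corollary~\ref{cor:isoqq} to get $p\cong q'$, noting that the isomorphism is inherited by the pullbacks $p_\mathsf{C}$ and $q'_\mathsf{C}$ along $\mathsf{C}\hookrightarrow\Free(\CMon\D)$, and concluding that their global sections are in bijection. You merely spell out the formal pullback and global-section arguments in slightly more detail than the paper does.
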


As an application of this theorem, let us choose $\mathsf{C}$ to be the full
subcategory of $\Free(\CMon\D)$ on a single object $\Sigma$. Then a
$\mathsf{C}$-variety of languages in $\C$ is precisely a local variety of
languages over $\Sigma$ in $\C$ closed under preimages of $\D$-monoid
endomorphisms $f\colon \Psi\Sigma^*\to\Psi\Sigma^*$.  We call such a local
variety \textbf{fully invariant}. A profinite equational $\mathsf{C}$-theory
consists of a single $\Sigma$-generated profinite $\D$-monoid $e\colon
\widehat{\Psi\Sigma^*}\epito F\Sigma$ such that, for all $\D$-monoid
endomorphisms $f\colon \Psi\Sigma^*\to\Psi\Sigma^*$, $e\widehat f$ factors
through $e$.
\[
        \xymatrix{
          \widehat{\Psi\Sigma^*} \ar[r]^{\widehat{f}} \ar@{->>}[d]_{e} &
          \widehat{\Psi\Sigma^*} \ar@{->>}[d]^{e} \\
          F\Sigma \ar@{-->}[r] & F\Sigma
        }
 \]
Again, such a $\Sigma$-generated profinite $\D$-monoid is called \textbf{fully
  invariant}. Hence full invariance means precisely that (in-)equalities are
stable under translations, i.e.\ for every $x, y \in \widehat{\Psi\Sigma^*}$ and
$f\colon\Psi\Sigma^*\to\Psi\Sigma^*$ we have that $e(x) = e(y)$ implies
$e(\widehat{f}x) = e(\widehat{f}y)$; in respect that $\mathscr{D}$-algebras are
ordered, $e(x) \leq e(y)$ implies $e(\widehat{f}x) \leq e(\widehat{f}y)$.
Therefore Theorem~\ref{thm:cvar} gives the following:

\begin{theorem}[Local Variety Theorem for Fully Invariant Varieties]
  There is a one-to-one correspondence between fully invariant local varieties
  over $\Sigma$ in $\C$ and fully invariant $\Sigma$-generated profinite $\D$-monoids. 
\end{theorem}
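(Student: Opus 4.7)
The plan is to derive this theorem as a direct instance of Theorem~\ref{thm:cvar}, taking $\mathsf{C}$ to be the full subcategory of $\Free(\CMon\D)$ whose only object is $\Sigma$ (so its morphisms are exactly the $\D$-monoid endomorphisms of $\Psi\Sigma^*$). The task then reduces to unfolding both notions of global section over this one-object base: a $\mathsf{C}$-variety of languages in $\C$ must specialise to a fully invariant local variety over $\Sigma$, and a profinite equational $\mathsf{C}$-theory of $\D$-monoids must specialise to a fully invariant $\Sigma$-generated profinite $\D$-monoid.

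For the language side, a global section of $p_\mathsf{C}\colon\FLang_\mathsf{C}\to\mathsf{C}$ picks a single object $(\Sigma,V)$ with $V\in\FLang_\Sigma$, together with, for each endomorphism $f\colon\Psi\Sigma^*\to\Psi\Sigma^*$, a lift of $f$ to a morphism $(\Sigma,V)\to(\Sigma,V)$ in $\FLang$. By Definition~\ref{def:flang} the existence of such a lift means exactly that $V$ is closed under $f$-preimages of languages in $V$; running over all endomorphisms $f$ gives full invariance as defined immediately above the theorem. For the profinite side, a global section of $q'_\mathsf{C}$ picks a single $\Sigma$-generated profinite $\D$-monoid $e\colon\widehat{\Psi\Sigma^*}\epito F\Sigma$ together with, for each endomorphism $f$, a morphism $h_f$ making diagram~\eqref{eq:profinite-equational-theory} commute; this is precisely the factorisation condition defining a fully invariant profinite $\D$-monoid.

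The one bookkeeping step that needs attention is functoriality of these assignments $f\mapsto h_f$, which is required for a global section but is not demanded in the pointwise definitions of full invariance. Because $\mathsf{C}$ has a single object, functoriality reduces to $h_{\id}=\id$ and $h_{g\circ f}=h_g\circ h_f$. Both are automatic: since $e$ is epic, the factorisation $h_f$ of $e\widehat f$ through $e$ is \emph{unique}, and from $e\widehat g\widehat f=h_g e\widehat f=h_gh_fe$ one concludes $h_{g\circ f}=h_g\circ h_f$; the identity case is analogous. Hence the data of a fully invariant local variety (resp.\ fully invariant profinite $\D$-monoid) is in natural bijection with a global section of $p_\mathsf{C}$ (resp.\ $q'_\mathsf{C}$).

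Combining these two identifications with Theorem~\ref{thm:cvar}, which provides a one-to-one correspondence between global sections of $p_\mathsf{C}$ and $q'_\mathsf{C}$ via the pullback of the opfibration isomorphism $p\cong q'$ along $\mathsf{C}\hookrightarrow\Free(\CMon\D)$, yields the desired bijection. The main obstacle, such as it is, lies in this uniqueness-of-factorisation argument ensuring functoriality — everything else is a straightforward translation of the general definitions to the one-object case.
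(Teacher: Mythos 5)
Your proof is correct and matches the paper's own derivation: the theorem is obtained by instantiating Theorem~\ref{thm:cvar} at the one-object full subcategory $\mathsf{C}$ of $\Free(\CMon\D)$ on $\Sigma$ and unfolding the two notions of global section, which is exactly what you do. Your functoriality check is harmless but not actually needed, since morphisms in $\FLang$ and $\FPFMon$ only require the \emph{existence} of the filler $h$ rather than carrying it as data, so a global section over a one-object base is determined by the pointwise closure and factorisation conditions alone.
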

%

\section{Conclusions and Future Work}
\label{sec:conclusion}
In this paper we studied varieties of languages, pseudovarieties of monoids and profinite equational theories from an abstract fibrational viewpoint. This led us to conceptually new proofs and generalisations for a number of Eilenberg-Reiterman-type results.

Our notion of profinite equational theory is introduced on a rather abstract
level, and it would be helpful to characterise theories syntactically and compare them with classical developments~\cite{Reiterman1982,Banaschewski1983}. To this end one can observe that in the category of compact Hausdorff spaces every epimorphism is regular. Hence, if $\D$-algebras are non-ordered, every $\Sigma$-generated profinite $\D$-monoid
$e\colon\widehat{\Psi\Sigma^*}\twoheadrightarrow M$ is the coequaliser of its
kernel pair $\pi_1, \pi_2\colon E \rightrightarrows \widehat{\Psi\Sigma^*}$,
where $E$ is the kernel congruence defined by
\[
  E= \set{ (u, v) \in \widehat{\Psi\Sigma^*} \times
    \widehat{\Psi\Sigma^*}}{e(u) = e(v)}.
\]
Hence a profinite equational theory corresponds to a family of \emph{profinite
  equations}, i.e.\ pairs of elements of a free profinite monoid.  From this
observation it should be possible to obtain syntactic counterparts of our
results, e.g., a generalisation of the main result of Gehrke et~al.~\cite{Gehrke2008} that local varieties of languages in $\CBA$ and $\CDLat$ are definable by profinite identities.

In addition, it would be useful to develop a notion of \emph{morphism} between
profinite equational theories, and correspondingly between varieties of languages, hence lifting our generalised Eilenberg-Reiterman correspondences from an isomorphism of posets to an equivalence of categories. Such a result may further justify the importance of a categorical
treatment of algebraic automata theory.

\bibliographystyle{plain}

\clearpage
\appendix
\renewcommand\thetheorem{\thesection.\arabic{theorem}}

\section{Ind-completion and pro-completion}
\label{sec:ind-completion}
The following facts on ind/pro-completions are standard results,
see~\cite{Johnstone1982} for further detail.
\begin{definition}
\begin{enumerate}
\item An \textbf{ind-completion} of a small category $\A$ is a full and
  faithful functor $\A\rightarrowtail\Ind\A$ such that
  $\Ind\A$ has filtered colimits and every functor
  $F$ from $\A$ to a category $\B$ with filtered colimits has an
  extension $\overline{F}\colon \Ind A\to B$ which preserves filtered colimits
  and is unique up to natural isomorphism:
  \[
    \xymatrix{
      \A \ar@{ >->}[r] \ar[rd]_{F} & \Ind\A
      \ar@{-->}[d]^{\overline{F}}\\
      & \B
    }
  \]
  If $\A$ is finitely cocomplete, then $\Ind\A$ is complete
  and cocomplete. In particular, every locally finite variety $\mathscr{D}$ is an
  ind-completion of its full subcategory $\mathscr{D}_f$ on finite algebras.
\item Dually a \textbf{pro-completion} of a small category $\A$ is a full and
  faithful functor $\A\rightarrowtail\Pro\A$ such that
  $\Pro\A$ has cofiltered limits and every functor
  $F$ from $\A$ to a category $\B$ with cofiltered limits has an
  extension $\overline{F}\colon \Pro A\to B$ which preserves cofiltered limits
  and is unique up to natural isomorphism:
  \[
    \xymatrix{
      \A \ar@{ >->}[r] \ar[rd]_{F} & \Pro\A
      \ar@{-->}[d]^{\overline{F}}\\
      & \B
    }
  \]
  If $\A$ is finitely complete, then $\Pro\A$ is complete
  and cocomplete.
\end{enumerate}
\end{definition}

\begin{rem}A concrete construction of $\Ind\A$ is the following: let $\Ind\A$ be
  the full subcategory of the functor category $[\A^{op},\CSet]$ on all filtered
  colimits of representable functors $\A(\mathord{-},A)\colon \A^{op}\to\CSet$,
  and let $\mathcal{Y}\colon \A\to \Ind\A$ be the codomain restriction of the Yoneda embedding $A\mapsto \A(\mathord{-},A)$. Then $\mathcal{Y}$ is an ind-completion of $\A$. Analogously, one obtains the pro-completion as the dual Yoneda embedding $\mathcal{Y}^\op\colon \A \to \Pro\A$, $A\mapsto \A(A,\mathord{-})$. Note that  $\Pro\A = (\Ind\A^\op)^\op$.
\end{rem}

\begin{theorem}
  Given a small finitely complete and cocomplete category $\A$, there
  is an adjunction $F \dashv U \colon \Pro\A \to \Ind\A$ such
  that $\mathcal{Y}^\op = F \circ \mathcal{Y}$ and $\mathcal{Y} = U \circ
  \mathcal{Y}^\op$:
  \[
    \UseTwocells
    \xymatrix{
      & \A \ar[rd]^{\mathcal{Y}^\op} \ar[ld]_{\mathcal{Y}} \\
      \Ind\A \rrtwocell_U^F{'\bot} & & \Pro\A
    }
  \]
\end{theorem}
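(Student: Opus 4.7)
\medskip

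\noindent\textbf{Proof proposal.} The plan is to produce the functors $F$ and $U$ by invoking the universal properties of the two completions, verify the compatibility equations $\mathcal{Y}^{\op}\cong F\circ\mathcal{Y}$ and $\mathcal{Y}\cong U\circ\mathcal{Y}^{\op}$ from their very constructions, and then establish the adjunction $F\dashv U$ by reducing the hom-set isomorphism to representables, where both sides collapse to $\A(-,-)$.

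First I would construct $F$ and $U$. Since $\A$ is finitely complete and cocomplete, both $\Ind\A$ and $\Pro\A$ are complete and cocomplete (as stated in the remark preceding the theorem). In particular $\Pro\A$ has filtered colimits, so the universal property of the ind-completion applied to the functor $\mathcal{Y}^{\op}\colon\A\to\Pro\A$ yields a functor $F\colon\Ind\A\to\Pro\A$, unique up to natural isomorphism, which preserves filtered colimits and satisfies $F\circ\mathcal{Y}\cong\mathcal{Y}^{\op}$. Dually, $\Ind\A$ has cofiltered limits, so the universal property of the pro-completion applied to $\mathcal{Y}\colon\A\to\Ind\A$ yields a cofiltered-limit-preserving functor $U\colon\Pro\A\to\Ind\A$ with $U\circ\mathcal{Y}^{\op}\cong\mathcal{Y}$.

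Next I would establish the adjunction $F\dashv U$ by constructing a natural bijection $\Pro\A(FX,Y)\cong\Ind\A(X,UY)$ for $X\in\Ind\A$ and $Y\in\Pro\A$. Write $X=\Colim_{i}\mathcal{Y}(A_{i})$ as a filtered colimit of representables in $\Ind\A$ and $Y=\Lim_{j}\mathcal{Y}^{\op}(B_{j})$ as a cofiltered limit of representables in $\Pro\A$. Since $F$ preserves filtered colimits and hom-functors in $\Pro\A$ send colimits in the first variable to limits, one has
\[
\Pro\A(FX,Y)\;\cong\;\Lim_{i}\Pro\A(\mathcal{Y}^{\op}A_{i},Y)\;\cong\;\Lim_{i}\Lim_{j}\Pro\A(\mathcal{Y}^{\op}A_{i},\mathcal{Y}^{\op}B_{j})\;\cong\;\Lim_{i}\Lim_{j}\A(A_{i},B_{j}),
\]
using full faithfulness of $\mathcal{Y}^{\op}$. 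Dually, since $U$ preserves cofiltered limits we have $UY\cong\Lim_{j}\mathcal{Y}(B_{j})$, and then
\[
\Ind\A(X,UY)\;\cong\;\Lim_{i}\Lim_{j}\Ind\A(\mathcal{Y}A_{i},\mathcal{Y}B_{j})\;\cong\;\Lim_{i}\Lim_{j}\A(A_{i},B_{j})
\]
by full faithfulness of $\mathcal{Y}$. Both sides being canonically identified with the same double limit of $\A$-hom-sets gives the required bijection, and naturality in $X$ and $Y$ follows because every step is induced by the universal mapping property of a (co)limit and hence is natural in the diagrams presenting $X$ and $Y$.

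The main obstacle I anticipate is bookkeeping rather than conceptual: to turn the displayed chains of isomorphisms into an honest natural transformation of bifunctors, one must check that the identifications are independent of the chosen presentations $X=\Colim_{i}\mathcal{Y}(A_{i})$ and $Y=\Lim_{j}\mathcal{Y}^{\op}(B_{j})$ and that they commute with morphisms in $\Ind\A$ and $\Pro\A$. The standard way to dispose of this is to fix canonical presentations (e.g.\ via the slice categories $\A/X$ and $Y/\A$ coming from the concrete models of $\Ind\A$ and $\Pro\A$ as filtered/cofiltered diagrams of representables in the functor category, cf.\ the remark following the definition) and to note that morphisms in the completions are, by construction, compatible families of $\A$-morphisms, so that both sides of the adjunction isomorphism are computed by the same $\lim\lim\A(A_{i},B_{j})$ formula naturally. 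Once this is in place, the equations $F\mathcal{Y}\cong\mathcal{Y}^{\op}$ and $U\mathcal{Y}^{\op}\cong\mathcal{Y}$ are built into the definitions and the proof is complete.
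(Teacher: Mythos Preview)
Your proposal is correct and follows essentially the same route as the paper: define $F$ and $U$ as the canonical (filtered-colimit-preserving resp.\ cofiltered-limit-preserving) extensions of $\mathcal{Y}^{\op}$ and $\mathcal{Y}$, and then verify the hom-set bijection by reducing both sides to the double limit $\Lim_i\Lim_j \A(A_i,B_j)$ via full faithfulness of the Yoneda embeddings. The paper's proof is just the terse version of your argument, without the explicit discussion of naturality and presentation-independence.
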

\begin{proof}
  $F$ and $U$ are the unique extensions of $\mathcal{Y}^\op$ and $\mathcal{Y}$ preserving filtered colimits and cofiltered limits, respectively.
  Since $\Ind\A$ consists of filtered colimits of representable functors
  $\mathcal{Y}(a) = \mathcal{A}(-, a)$ and similarly for $\Pro\A$, we
  have 
  \begin{align*}
    \Pro\A(F\Colim_a\mathcal{Y}a, \Lim_b \mathcal{Y}^\op b)
    & \cong \Lim_b \Pro\A(\Colim \mathcal{Y}^\op a, \mathcal{Y}^\op b) \\
    & \cong \Lim_a \Lim_b\Pro\A(\mathcal{Y}^\op a, \mathcal{Y}^\op b)
    \\
    & \cong \Lim_a \Lim_b \A(a, b) \\
    & \cong \Lim_a \Lim_b \Ind\A(\mathcal{Y}a, \mathcal{Y}b) \\
    & \cong \Lim_b \Ind\A(\Colim_a\mathcal{Y}a, \mathcal{Y}^\op b) \\
    & \cong \Ind\A(\Colim_a\mathcal{Y}a, U\Lim_b\mathcal{Y}^\op b).
  \end{align*}
\end{proof}

\section{Proofs}

\subsection*{Proof of Proposition~\ref{prop:characterisation_rhoT}}
Let $\CAut_0\Sigma$ and $\CAut_{0,lf}\Sigma$ denote the categories of $T_\Sigma^0$-coalgebras and locally finite $T_\Sigma^0$-coalgebras, respectively.
The functor $|T_\Sigma|\colon \C \to
\CSet$ is naturally isomorphic to $T_\Sigma^0 \circ |{-}|$, so the adjunction
$\Phi \dashv |{-}|\colon \C\to\CSet$ induces an adjunction $\CAut\Phi \dashv
\CAut|{-}|\colon\CAut\Sigma\to \CAut_0\Sigma$
by~\cite[Corollary~2.15]{Hermida1998}. The right adjoint~$\CAut|{-}|$ maps an
automaton~$(Q, \gamma)$ in $\C$ to its underlying automaton $(|Q|, |\gamma|)$ in $\CSet$,
and the left adjoint $\CAut\Phi$ maps an automaton $(Q_0, \gamma_0)$ in $\CSet$ to an
automaton in $\C$ with carrier $\Phi Q_0$.  Since  $\C$ is
locally finite, the adjunction restricts to one between 
the full subcategories $\CAut_{lf}$ and $\CAut_{0,lf}$ of locally finite
$\Sigma$-automata. Since the restricted right adjoint $\CAut|{-}|\colon \CAut_{lf}\Sigma \to \CAut_{0,lf}\Sigma$ preserves limits, it maps the terminal locally finite $T_\Sigma$-coalgeba $\rho T_\Sigma$ to the terminal locally finite $T_\Sigma^0$-coalgebra $\rho T_\Sigma^0$, i.e.\ to the automaton of regular languages.

\subsection*{Proof of Lemma~\ref{lem:hwelldef}}
\begin{enumerate}
\item For all $P\in \FVar_\Sigma$, the set $f_\sharp(P)$ forms a local
  pseudovariety of $\D$-monoids over $\Delta$. Indeed, closure under quotients
  is obvious. For closure under subdirect products let $e_{i}\colon \Psi\Delta^*\epito N_i$ ($i=1,2$) be two $\Delta$-generated $\D$-monoids in $f_\sharp(P)$, that is, $e_{i}f = g_ie_{M_i}$ for some $M_i\in P$ and morphisms $g_i$. We may assume that $M := M_1 = M_2$ -- otherwise replace $M_1$ and $M_2$ by their subdirect product $M_1\vee M_2$.  Hence the left diagram below commutes. By the fill-in property, there exists a
  unique morphism $h$ from $M$ to the subdirect product $N_1 \vee N_2$ of
  $N_1$ and $N_2$ such that the right diagram below commutes.
  \[
    \xymatrix@R+1em{
      \Psi\Sigma^* \ar@{->>}[d]_{e_M} \ar[r]^{f} & \Psi\Delta^* \ar[d]^{\<e_1, e_2\>}\\
      M \ar[r]_-{\<g_1, g_2\>}& N_1 \times N_2}
    \qquad\qquad
    \xymatrix@R+.5em{
      \Psi\Sigma^* \ar[r]^{f} \ar@{->>}[d]_{e_M} & \Psi\Delta^*
      \ar@{->>}[d]_{\mathrm{Im}\<e_1, e_2\>} \ar@/^1pc/[rd]^{\<e_1, e_2\>}\\
      M \ar@/_1pc/[rr]_{\<g_1, g_2\>} \ar@{-->}[r]^-{h} & N_1 \vee N_2 \ar@{ >->}[r] & N_1 \times N_2}
  \]
  Hence $N_1 \vee N_2$ lies in $f_\sharp(M)$.
\item $f_\sharp$ is clearly order-preserving, i.e.\ $P\subseteq P'$ implies $f_\sharp(P)\subseteq f_\sharp(P')$.
\item It remains to show the functoriality, i.e.\ $\id_\sharp = \id$ and $(g
f)_\sharp = g_\sharp f_\sharp$ for any two $\D$-monoid morphisms $f\colon
\Psi\Sigma^*\to \Psi\Delta^*$ and $g\colon\Psi\Delta^*\to\Psi\Gamma^*$. The first statement follows from the closure of local pseudovarieties under quotients. For the second one let $P\in \FVar_\Sigma$ and suppose that $K\in g_\sharp f_\sharp(P)$. Hence there exist finite $\D$-monoids $M\in P$ and $N\in f_\sharp(P)$ and $\D$-monoid morphisms making the diagram below commute.
  \[
    \xymatrix{
      \Psi\Sigma^* \ar[r]^{f}\ar@{->>}[d] & \Psi\Delta^* \ar[r]^{g} \ar@{->>}[d]
      & \Psi\Gamma^* \ar@{->>}[d] \\ M \ar[r] & N \ar[r] & K
    }
  \] 
  This implies $K\in (gf)_\sharp(P)$.  On the other hand, suppose that $K \in
  (gf)_\sharp(P)$, i.e.\ there exists some $M\in P$ and a $\D$-monoid morphism $h\colon M \to K$ such that
  the left diagram below commutes. Consider the factorisation of
  $e_K\circ g\colon\Psi\Delta^* \rightarrow\Psi\Gamma^* \twoheadrightarrow K$
  in the right diagram:
  \[
    \xymatrix{
      \Psi\Sigma^* \ar[r]^{f}\ar@{->>}[d] & \Psi\Delta^* \ar[r]^{g}
      & \Psi\Gamma^* \ar@{->>}[d]^{e_K} \\
      M \ar[rr]_{h} &  & K
    }
    \qquad\qquad
    \xymatrix{
      \Psi\Sigma^* \ar[r]^{f}\ar@{->>}[d] & \Psi\Delta^* \ar[r]^{g} \ar@{->>}[d]
      & \Psi\Gamma^* \ar@{->>}[d]^{e_K} \\
      M \ar@{-->}[r] \ar@/_1pc/[rr]_h & N \ar@{ >->}[r] & K
    }
  \]
  By the fill-in property $h$ factors through the submonoid $N$ of the finite
  monoid $K$. Hence $N\in f_\sharp(P)$ and
  $K\in (g_\sharp f_\sharp)(P)$. \qedhere
\end{enumerate}

\subsection*{Proof of Theorem~\ref{thm:identify-pseudovariety}}

\begin{lemma}
  \label{lem:canonical_limit}
  Every profinite $\D$-monoid is the cofiltered limit of its finite quotients.
\end{lemma}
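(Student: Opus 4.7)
The plan is to begin from the defining representation $M = \Lim_{i\in\mathscr{J}} M_i$ of $M$ as a cofiltered limit of finite $\D$-monoids with limit projections $\pi_i\colon M \to M_i$ (available since $\CPFMon\D = \Pro\CMon_f\D$), and to extract from it a cofinal sub-diagram of genuine finite quotients of $M$ whose limit is $M$ itself.

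First I would factor each projection as $\pi_i = m_i\circ e_i$ via the (epi, strong mono) factorization on $\CPFMon\D$ lifted from $\CMon\D$, obtaining a finite quotient $e_i\colon M \epito N_i$ together with an embedding $m_i\colon N_i \monoto M_i$. The cofiltered structure of $\mathscr{J}$ and the compatibility $\pi_j = (M_i\to M_j)\circ \pi_i$ propagate canonically through these image factorizations, so that $(N_i)_{i\in\mathscr{J}}$ forms a cofiltered diagram with surjective connecting morphisms. Next I would show that this sub-diagram is cofinal in the diagram of all finite quotients of $M$. Given an arbitrary finite quotient $q\colon M \epito Q$, the finite $\D$-monoid $Q$ is finitely copresentable in $\CPFMon\D = \Pro\CMon_f\D$ (Theorem~\ref{thm:promon}(1)), so $q$ factors through some $\pi_i$ as $q = f\circ \pi_i = (f\circ m_i)\circ e_i$; surjectivity of $q$ then forces $f\circ m_i\colon N_i \to Q$ to be surjective, yielding precisely the dominating morphism from $N_i$ to $Q$ in the finite-quotient diagram required for cofinality.

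By cofinality, $\Lim_Q Q \cong \Lim_i N_i$, so it suffices to prove $M \cong \Lim_i N_i$. The canonical cone $(e_i\colon M \to N_i)$ induces a mediating morphism $\phi\colon M \to \Lim_i N_i$, which is surjective by Lemma~\ref{lem:facts-inverse-systems}(2) since each $e_i$ is. Conversely, the embeddings $(m_i)$ assemble into a cone over $(M_i)$, giving a mediator $\psi\colon \Lim_i N_i \to \Lim_i M_i = M$; the universal property of $M = \Lim_i M_i$ forces $\psi\circ\phi = \id_M$. Hence $\phi$ is injective as well as surjective, i.e.\ a continuous $\D$-monoid bijection between compact (ordered) Hausdorff spaces, and therefore an isomorphism in $\CPFMon\D$ (the order-reflection required in the ordered case comes for free from $\psi\circ\phi=\id_M$ and monotonicity of $\psi$).

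The hard part will be the cofinality argument: it depends on carefully invoking the universal property of the pro-completion $\Pro\CMon_f\D$ to factor every morphism with finite codomain through the defining diagram of $M$. A secondary subtlety is ensuring that the (epi, strong mono) factorization in $\CPFMon\D$ interacts correctly with cofiltered limits so that the $(N_i)$ really do inherit a cofiltered diagram with surjective connecting morphisms; this is standard in the compact Hausdorff setting since continuous images of compact spaces into Hausdorff (resp.\ ordered Hausdorff) targets are automatically closed, but it is worth verifying explicitly in the ordered variant of $\CPFMon\D$.
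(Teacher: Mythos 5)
Your proof is correct, and its core move --- factoring the limit projections of a presenting cofiltered diagram through their (epi, strong mono) images to manufacture finite quotients of $M$ --- is exactly the paper's. The organisational difference is that you start from an \emph{arbitrary} cofiltered presentation $M=\Lim_i M_i$, which forces you to add a cofinality step (via finite copresentability of finite $\D$-monoids in $\Pro\CMon_f\D$) to relate the extracted quotients $N_i$ to the diagram of \emph{all} finite quotients of $M$. The paper instead takes the canonical diagram indexed by the comma category $(M\downarrow\CMon_f\D)$, whose limit is $M$ by the universal property of the pro-completion; since every finite quotient $q\colon M\epito Q$ is itself an object of that comma category, the image diagram already contains all finite quotients and no cofinality or copresentability argument is needed. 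Your route costs one extra (correct) step but is somewhat more robust in that it works from any presentation. One small simplification: once you have $\psi\circ\phi=\id_M$ with $\phi$ epi, $\phi$ is an isomorphism purely formally ($\phi\psi\phi=\phi$, then cancel the epi $\phi$ to get $\phi\psi=\id$), so the closing appeal to compactness and order-reflection is not actually needed.
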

\begin{proof}
  Since the category $\CPFMon\D$ is the pro-completion of finite $\D$-monoids,
  every profinite $\D$-monoid is the limit of its canonical
  cofiltered diagram
  \[
     \xymatrix{(M\downarrow\CMon_f\D)\ar[r]^-{Q}&
       \CMon_f\D \ar@{ >->}[r]^-{i} &\CPFMon\D}
  \]
  where $(M \downarrow \CMon_f\D)$ is the comma category from $M$ to the
  category of finite $\D$-monoids, and $Q$ is the projection functor. However, given this
  canonical diagram, we can always factor every morphism $M \to N$ for $N \in
  \CMon_f\D$ into a surjective morphism and an embedding:
  \[
    \xymatrix@C-1em{
      & & M \ar@/^3pc/[rrdd]^g \ar@{->>}[rd]^{\mathrm{Im}(g)}
      \ar@/_3pc/[lldd]_f \ar@{->>}[ld]_{\mathrm{Im}(f)} \\
      & f[M] \ar@{ >->}[ld] \ar@{-->}[rr] & & g[M] \ar@{ >->}[rd] \\
     N \ar[rrrr]_h & & & & N' 
    }
  \]
  This diagram consisting of all finite quotients is also cofiltered, since $i\circ Q$ is
  cofilterd.  Then, it is easy to see that $M$ with $\{ \mathrm{Im}(f)\colon M
  \to f[M]\}_{f \in (M\downarrow\CMon_f\D)}$ is a cofiltered limit.
\end{proof}
\begin{proof}[Proof of Theorem~\ref{thm:identify-pseudovariety}]
 
 (a) Let $e_M\colon \widehat{\Psi\Sigma^*} \twoheadrightarrow M$ be a
  profinite $\Sigma$-generated $\D$-monoid, and suppose that $K$
  is a finite quotient of $M$. Note that $K$ is finitely copresentable in
  $\CPFMon\D$, so since $\widehat{\Psi\Sigma^*}$ is the limit of all finite quotients of $\Psi\Sigma^*$, we see that $ee_M$ factors through some limit projection
  $\pi_N$:
   \[
     \xymatrix{
       \Psi\Sigma^* \ar[r] \ar@{->>}[rd] & \widehat{\Psi\Sigma^*} \ar@{->>}[d]_{\pi_N}
       \ar@{->>}[r]^{e_M} & M\ar@{->>}[d]^{e} \\ & N \ar@{->>}[r]_f & K
     }
   \]
 Therefore $K$ is a $\Sigma$-generated $\D$-monoid. It now immediately follows that the set $\mathscr{V}_M$ of finite quotients of $M$ forms  a local pseudovariety over $\Sigma$.  Clearly, the construction $M \mapsto \mathscr{V}_M$ is order-preserving, and it is injective by Lemma~\ref{lem:canonical_limit}.

 (b) Conversely, we can view every local pseudovariety $P\in\FVar_\Sigma$ as a diagram in $\CPFMon\D$ defined by 
   \[
     \xymatrix{P\;\ar@{^(->}[r]^-{i_P}&\Quo(\Psi\Sigma^*)\ar[r]^-{Q}&\CPFMon\D}
   \]
  where $i_P$ is the full inclusion and $Q$ is the projection functor mapping
  $\Psi\Sigma^* \twoheadrightarrow M$ to $M$ and $f\colon M \twoheadrightarrow
  M'$ in $\Quo(\Psi\Sigma^*)$ to $f$. 

  Note that each $M \in P$ with the discrete topology is a
  non-empty compact Hausdorff space. Then $M_P\defeq \Lim (Q \circ i_P)$ is a profinite $\D$-monoid where
  each limit projection $\pi_M$ is surjective by
  Lemma~\ref{lem:facts-inverse-systems}.  Suppose that $P \subseteq P'$.  Then
  there exists a mediating morphism from $M_{P'}$ to $M_{P}$, since the 
 projections $M_{P'} \xrightarrow{\pi_M} M$ for $M \in P$ form  a cone
  over $Q\circ i_{P}$. This mediating morphism is surjective, because every $\pi_M$ is surjective. In particular, taking $P'=\Quo(\Psi \Sigma^*)$ we get a surjective morphism $\widehat{\Psi\Sigma^*}\epito M_P$. (Recall that $\widehat{\Psi\Sigma^*}$ is by definition the limit of all finite
  quotients of $\Psi\Sigma^*$.) 

  (c) To show that the two construction of (a) and (b) are mutually inverse, we need to prove that, given $P\in \FVar_\Sigma$, every finite quotient $e_M\colon M_P = \Lim (i_P
  \circ Q) \twoheadrightarrow M$ is contained in $P$. Since $M$ is finitely
  copresentable, the morphism $e_M$ factors through some $N \in P$, so $M$ must be a
  quotient of $N$; that is, $M \in P$. We conclude the construction 
  $P\mapsto M_P$ is surjective. It is also order-preserving by the argument given in (b).
  
 (d)  The second part of theorem follows by a straighforward use of universal
  properties. 
\end{proof}

\subsection*{Proof of Theorem~\ref{thm:reiterman}}

The proof proceeds through several lemmas.

\begin{lemma}
  Given a profinite equational theory $\mathcal{T}$ of $\D$-monoids, the class $\mathscr{V}$ associated to $\mathcal{T}$ forms a pseudovariety of $\D$-monoids.
\end{lemma}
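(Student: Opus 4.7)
The plan is to verify each of the three pseudovariety closure conditions (under submonoids, finite products, and quotients) for $\mathscr{V}$ in turn. The key tool throughout is that $\widehat{\Psi\Sigma^*}$ is the free profinite $\D$-monoid on the set $\Sigma$ (Theorem~\ref{thm:promon}), so a $\D$-monoid morphism out of $\widehat{\Psi\Sigma^*}$ is uniquely determined by its restriction along the canonical map $\eta\colon\Sigma\to\widehat{\Psi\Sigma^*}$. I will repeatedly use this both for existence (extending functions on generators) and for uniqueness (forcing two morphisms to coincide once they agree on~$\Sigma$).

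For closure under submonoids, I take $M\in\mathscr{V}$, a submonoid $m\colon M'\monoto M$, and a morphism $f\colon\widehat{\Psi\Sigma^*}\to M'$. The composite $mf$ factors through $e_\Sigma$ as $mf = ge_\Sigma$ by the hypothesis on $M$. The fill-in property of the factorisation system (epi, strong mono) on $\CMon\D$, applied to the square with $e_\Sigma$ surjective and $m$ injective (and order-reflecting), then yields the required diagonal $\overline{f}\colon F\Sigma\to M'$ with $\overline{f}e_\Sigma = f$ and $m\overline{f} = g$. For closure under finite products, I argue componentwise: if $M_1,M_2\in\mathscr{V}$ and $f\colon\widehat{\Psi\Sigma^*}\to M_1\times M_2$, then each projection $\pi_i f$ factors as $g_i e_\Sigma$ with $g_i\colon F\Sigma\to M_i$, so $\langle g_1,g_2\rangle e_\Sigma = f$. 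The terminal $\D$-monoid (the empty product) lies in $\mathscr{V}$ for trivial reasons.

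The main obstacle is closure under quotients, because given $M\in\mathscr{V}$ and a surjection $q\colon M\epito M'$ there is no a priori reason for an arbitrary morphism $f\colon\widehat{\Psi\Sigma^*}\to M'$ to factor through $e_\Sigma$. My strategy is to first lift $f$ along $q$. Using surjectivity of $q$, I choose preimages in $|M|$ of the values of $f\eta\colon\Sigma\to|M'|$, obtaining a set map $\Sigma\to|M|$ whose free extension is a morphism $\tilde{f}\colon\widehat{\Psi\Sigma^*}\to M$. The morphisms $q\tilde{f}$ and $f$ agree on $\Sigma$ by construction and hence coincide by the uniqueness clause of the universal property. Since $M\in\mathscr{V}$, there is a factorisation $\tilde{f} = he_\Sigma$ with $h\colon F\Sigma\to M$, whence $\overline{f}\defeq qh$ satisfies $\overline{f}e_\Sigma = q\tilde{f} = f$, as required. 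I expect this quotient case to be the only step that is not essentially formal, the crux being the use of freeness of $\widehat{\Psi\Sigma^*}$ on the set $\Sigma$ to reduce the lifting problem to a choice of set-theoretic preimages.
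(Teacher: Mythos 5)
Your proposal is correct and follows essentially the same route as the paper: submonoids via the fill-in property of the (surjective, injective) factorisation system, finite products via the mediating morphism into the product, and quotients via lifting $f$ along the surjection using the freeness/projectivity of $\widehat{\Psi\Sigma^*}$. The only cosmetic difference is that the paper invokes projectivity of the free profinite $\D$-monoid as a known fact, whereas you unfold it explicitly by choosing set-theoretic preimages on the generating set $\Sigma$ — the same argument in more detail.
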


\begin{proof}

We need to show closure under quotients, submonoids and finite products. To this end, let $M \in \mathscr{V}$ and also finitely many $M_i \in \mathscr{V}$ be given. In the first two cases below, $f$ refers to a morphism from
  $\widehat{\Psi\Sigma^*}$ to a quotient and a submonoid of $M$ respectively.
  For the last case, $f$ is a morphism to the finite product $\prod_i M_i$. See
  following diagrams for references.
  \begin{figure}[h!]
  \renewcommand{\figurename}{Diagram}
    \centering
    \begin{minipage}{0.3\textwidth}
      \[
        \xymatrix{
          {\widehat{\Psi\Sigma^*}}\ar@{->>}[r]^{e_\Sigma}  \ar[d]_{f} \ar[rd]|h
          & F\Sigma \ar[d]^{\overline{h}} \\
          N & M \ar@{->>}[l]^{e}
        }
      \]
      \caption{Quotients}
      \label{dig:quotients}
    \end{minipage}%
    \begin{minipage}{0.3\textwidth}
      \[
        \xymatrix{
          {\widehat{\Psi\Sigma^*}} \ar@{->>}[r]^{e_\Sigma} \ar[d]_{f} &
          F\Sigma\ar[d]^{\overline{mf}}\ar@{-->}[ld]|h \\
            N \ar@{>->}[r]_{m} & M
        }
      \]
      \caption{Submonoids}
      \label{dig:submonoids}
    \end{minipage}
    \begin{minipage}{0.3\textwidth}
      \[
        \xymatrix{
          {\widehat{\Psi\Sigma^*}} \ar@{->>}[r]^{e_\Sigma} \ar[d]_{f} & F\Sigma
          \ar[d]^{\overline{\pi_if}} \ar@{-->}[ld]|h \\
          {\prod_i M_i} \ar[r]_{\pi_i} & M_i
        }
      \]
      \caption{Finite products}
      \label{dig:finite-products}
    \end{minipage}
  \end{figure}
  \begin{description}
    \item[Quotients:]
      Given a quotient $N$ of $M$ with $e\colon M \twoheadrightarrow N$,
      since free algebras $\widehat{\Psi\Sigma^*}$ are projective there
      exists $h$ with $f = eh$. By assumption $h$ factors through $e_\Sigma$ via some $\overline h$. Hence $f$ factors through $e_\Sigma$ via $e\overline h$.
    \item[Submonoids:]
      Given a submonoid $N$ of $M$, the composite $mf$ factors through
      $e_\Sigma$ by assumption. By the fill-in property, there is a
      morphism $h\colon F\Sigma \to N$ such that Diagram~\ref{dig:submonoids} commutes.
    \item[Finite products:]
      Every $\pi_i f$ factors through $e_\Sigma$ by assumption, so
      there is a mediating morphism $h\defeq \< \overline{\pi_if}\>$ such that
      Diagram~\ref{dig:finite-products} commutes.\qedhere
  \end{description}
\end{proof}

\begin{lemma}
  Given a pseudovariety $\mathscr{V}$ of $\D$-monoids the corresponding
  morphisms $e_\Sigma\colon\widehat{\Psi\Sigma^*} \epito F\Sigma$ form a profinite
  equational theory.
\end{lemma}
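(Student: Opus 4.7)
The goal is to verify that the assignment $\mathcal{T}\colon\Sigma\mapsto(\Sigma,F\Sigma)$, $f\mapsto f$, is a functor $\Free(\CMon\D)\to\FPFMon$ satisfying $q'\mathcal{T}=\id$. Functoriality of $\mathcal{T}$ is automatic once the underlying data is well-defined, so the plan reduces to two checks: (a) for every finite $\Sigma$, the object $F\Sigma$ is a bona fide $\Sigma$-generated profinite $\D$-monoid, and (b) for every $\D$-monoid morphism $f\colon\Psi\Sigma^*\to\Psi\Delta^*$, diagram~\eqref{eq:profinite-equational-theory} commutes for some $h$.

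For (a), I would show that the class
\[
  P_\Sigma \defeq \set{e_M\colon\Psi\Sigma^*\epito M}{M\in\mathscr{V}}
\]
is a local pseudovariety over $\Sigma$, and then invoke Theorem~\ref{thm:identify-pseudovariety}(a) to obtain $F\Sigma$ as the $\Sigma$-generated profinite $\D$-monoid corresponding to $P_\Sigma$. Closure of $P_\Sigma$ under quotients is inherited directly from the closure of $\mathscr{V}$ under quotients in $\CMon_f\D$. Closure under subdirect products follows because $M_1\vee M_2$ is a submonoid of the finite product $M_1\times M_2$, and $\mathscr{V}$ is closed under both finite products and submonoids.

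For (b), by Theorem~\ref{thm:identify-pseudovariety}(b), commutativity of~\eqref{eq:profinite-equational-theory} is equivalent to the following finite condition: for every $N\in\mathscr{V}_{F\Delta}=P_\Delta$ there exist $M\in\mathscr{V}_{F\Sigma}=P_\Sigma$ and a $\D$-monoid morphism $h_N\colon M\to N$ satisfying $e_Nf=h_Ne_M$. Given such an $N$, I would take the (surjection, injection)-factorisation of the composite $e_N\circ f\colon\Psi\Sigma^*\to N$ in $\CMon\D$, say $\Psi\Sigma^*\xrightarrow{e_M} M\xrightarrow{h_N} N$. The image $M$ is finite (being a submonoid of the finite $N$), and lies in $\mathscr{V}$ by closure of $\mathscr{V}$ under submonoids. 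Since $e_M$ is surjective, $M\in P_\Sigma$, and the required square commutes by construction of the factorisation.

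The main conceptual hurdle is to recognise that Theorem~\ref{thm:identify-pseudovariety}(b) lets us bypass the (infinite) profinite objects $\widehat{\Psi\Sigma^*}$ and $F\Sigma$ altogether and argue purely at the level of finite $\Sigma$-generated $\D$-monoids. Once this reduction is in place, the rest is the standard image-factorisation argument familiar from Eilenberg's theorem, invoking nothing beyond the defining closure properties of the pseudovariety $\mathscr{V}$ together with the lifting of the (epi, strong mono) factorisation system from $\D$ to $\CMon\D$.
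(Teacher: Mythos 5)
Your proposal is correct and follows essentially the same route as the paper: form the local pseudovariety $P_\Sigma$ of $\Sigma$-generated members of $\mathscr{V}$, pass to $F\Sigma$ via Theorem~\ref{thm:identify-pseudovariety}, and verify the morphism condition by reducing it (via part (b) of that theorem) to the finite level, where the image factorisation of $e_N\circ f$ together with closure of $\mathscr{V}$ under submonoids does the work. Your write-up is in fact slightly more explicit than the paper's in justifying why $P_\Sigma$ is closed under subdirect products.
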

\begin{proof}
  Recall that $P_\Sigma$ is the set of $\Sigma$-generated monoids
  in~$\mathscr{V}$. Since $\mathscr{V}$ is a pseudovariety,
  $P_\Sigma$ is closed under quotients and subdirect products, so $P_\Sigma$ is
  a local pseudovariety over $\Sigma$.  corresponding uniquely to a quotient
  $e_\Sigma\colon\widehat{\Psi\Sigma^*} \twoheadrightarrow F\Sigma$ of the free
  profinite monoid. To see that the morphisms $e_\Sigma$ form a profinite equational theory, use Theorem~\ref{thm:identify-pseudovariety}: 
  for every $f\colon \Psi\Sigma^* \to \Psi\Delta^*$ and every
  $e\colon \Psi\Delta^* \twoheadrightarrow N$ in $P_\Delta$, the factorisation
  $\xymatrix@C+1em{\Psi\Sigma^* \ar@{->>}[r]^-{\mathrm{Im}(ef)} & M \ar@{ >->}[r] & N}$ of
  $ef$ fulfils the left-hand diagram in the Theorem 
  where the $\Sigma$-generated monoid $M$ of $N$ is in $P_\Sigma$ by the fact that
  $\mathscr{V}$ is closed under submonoids.  Hence the right diagram in the Theorem also commutes for some $h$, so it follows that the collection
  $\{e_\Sigma \}_\Sigma$ forms a profinite equational theory.
\end{proof}
Using the following lemma a straightforward verification shows that the constructions $\mathcal{T}\mapsto \mathscr{V}$ and $\mathscr{V}\mapsto\mathcal{T}$ are mutually inverse.
\begin{lemma}
  Let $\mathscr{V}$ be the pseudovariety corresponding to a profinite theory
  $(e_\Sigma\colon \widehat{\Psi\Sigma^*}\epito F\Sigma)_\Sigma$.
  Then $M \in \mathscr{V}$ if and only if $M$ is a quotient of~$F|M|$.
\end{lemma}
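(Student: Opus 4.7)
The plan is to prove both implications by reducing them to the defining universal property of $\mathscr{V}$, using that $M$ itself sits canonically as a quotient of $\widehat{\Psi|M|^*}$. Concretely, for a finite $\D$-monoid $M$ the counit of the free-forgetful adjunction gives a surjection $e\colon \Psi|M|^* \twoheadrightarrow M$, and its adjoint transpose under $\widehat{(-)} \dashv U \colon \CPFMon\D\to\CMon\D$ is a surjective continuous $\D$-monoid morphism $\tilde{e}\colon \widehat{\Psi|M|^*}\twoheadrightarrow M$ (surjective because $M$ is finite and discrete, so $\tilde{e}\circ\eta = e$ is surjective and the image is closed). I read ``$M$ is a quotient of $F|M|$'' to mean that $\tilde{e}$ factors as $g\circ e_{|M|}$ for some (necessarily surjective) $g\colon F|M|\twoheadrightarrow M$; this is the natural notion under the order on $\Quo(\widehat{\Psi|M|^*})$ from Theorem~\ref{thm:identify-pseudovariety}.

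For the forward direction, assume $M \in \mathscr{V}$. Apply the defining clause of Remark~\ref{rem:pvartotheo}(a) to the morphism $\tilde{e}\colon \widehat{\Psi|M|^*} \to M$: it factors through $e_{|M|}$ via a unique $\overline{\tilde{e}}\colon F|M|\to M$. Since $\tilde{e} = \overline{\tilde{e}}\circ e_{|M|}$ is surjective, so is $\overline{\tilde{e}}$, exhibiting $M$ as a quotient of $F|M|$.

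For the reverse direction, suppose $g\colon F|M|\twoheadrightarrow M$ with $\tilde{e} = g\circ e_{|M|}$. Given any $\D$-monoid morphism $f\colon\widehat{\Psi\Sigma^*}\to M$, I would first lift it through the counit: define $h\colon \Psi\Sigma^*\to\Psi|M|^*$ on generators by $a\mapsto f(\eta(a))\in |M|\subseteq\Psi|M|^*$, so that $e\circ h = f\circ\eta$ on generators, hence as $\D$-monoid morphisms. Passing to profinite completions yields $\tilde{e}\circ\widehat{h} = f$ (using that both sides are continuous morphisms from $\widehat{\Psi\Sigma^*}$ agreeing after precomposing with the dense $\eta$). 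Now $h$ is a morphism in $\Free(\CMon\D)$, and because the profinite equational theory $(e_\Sigma\colon \widehat{\Psi\Sigma^*}\twoheadrightarrow F\Sigma)_\Sigma$ is a global section of $q'$, Diagram~\eqref{eq:profinite-equational-theory} commutes: there exists $h'\colon F\Sigma\to F|M|$ with $e_{|M|}\circ\widehat{h} = h'\circ e_\Sigma$. Combining these identities gives
\[
  f \;=\; \tilde{e}\circ\widehat{h} \;=\; g\circ e_{|M|}\circ\widehat{h} \;=\; g\circ h'\circ e_\Sigma,
\]
so setting $\overline{f}\defeq g\circ h'$ exhibits the required factorisation. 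Since $\Sigma$ and $f$ were arbitrary, $M\in\mathscr{V}$.

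The main obstacle is the reverse direction, and within it the key subtlety is the construction of the lift $h\colon\Psi\Sigma^*\to\Psi|M|^*$ together with the verification that $\tilde{e}\circ\widehat{h}=f$; everything else is a direct diagram chase enabled by the functoriality built into the profinite theory. Once this lift is in place, the argument reduces a general $\Sigma$-indexed test to the canonical test at $\Sigma=|M|$, which is exactly the content of the hypothesis that $M$ is a quotient of $F|M|$.
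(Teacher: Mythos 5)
Your proof is correct and follows essentially the same route as the paper: the forward direction is identical (apply the defining property of $\mathscr{V}$ to the canonical quotient $\widehat{\Psi|M|^*}\epito M$ and observe the factoring map is surjective), and the reverse direction is the detailed version of the paper's one-line appeal to projectivity. Your explicit generator-level construction of the lift $h\colon\Psi\Sigma^*\to\Psi|M|^*$ is in fact a worthwhile sharpening: a bare projectivity lift would only produce a continuous morphism $\widehat{\Psi\Sigma^*}\to\widehat{\Psi|M|^*}$, which need not be of the form $\widehat{h}$ for a morphism $h$ of $\Free(\CMon\D)$, and that form is exactly what is needed to invoke the global-section property of the theory.
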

\begin{proof}
  Suppose that $M \in\mathscr{V}$. Then $M$ is a quotient of the free
  $\D$-monoid $\Psi|M|^*$ generated by $M$ itself, so it is also a
  quotient of the free profinite $\D$-monoid $\widehat{\Psi|M|^*}$. By
  assumption, the quotient map $\widehat{\Psi|M|^*}\twoheadrightarrow M$
  factors though $F|M|$ via some morphism that is necessarily surjective. 
  The other direction follows from the projectivity of $\widehat{\Psi|M|^*}$. 
\end{proof}

\end{document}